\DeclareMathAlphabet{\mathcal}{OMS}{cmsy}{m}{n}
\newtheorem*{theorem*}{Theorem}
\newtheorem{theorem}{Theorem}[section]
\newtheorem{lemma}[theorem]{Lemma}
\theoremstyle{definition}
\theoremstyle{remark}
\newtheorem{remark}[theorem]{Remark}
\newcommand{\E}{\mathop{\mathbb E}}
\newcommand{\bd}[1]{\mathrm{bd}\left(#1\right)}
\newcommand{\cvp}[1][]{\ensuremath{\mathrm{CVP}_{#1}}}
\newcommand{\svp}[1][]{\ensuremath{\mathrm{SVP}_{#1}}}
\newcommand{\acvp}[2][(1+\epsilon)]{\ensuremath{#1}-\cvp[#2]}
\renewcommand{\epsilon}{\varepsilon}
\renewcommand{\L}{\mathscr{L}}
\newcommand{\vs}{\mathbf{s}}
\newcommand{\vc}{\mathbf{c}}
\def\R{\mathbb{R}}
\def\Q{\mathbb{Q}}
\def\Z{\mathbb{Z}}
\def\N{\mathbb{N}}
\def\E{\mathbb{E}}
\def\st{\;:\;}
\DeclareMathOperator{\dist}{dist}
\DeclareMathOperator{\vol}{Vol}
\DeclareMathOperator{\conv}{conv}
\author{Thomas Rothvoss\thanks{Supported by NSF CAREER grant 1651861 and a David \& Lucile Packard Foundation Fellowship.} \\
	University of Washington\\
	{\small \texttt{rothvoss@uw.edu}}
	\and
	Moritz Venzin\thanks{Supported by the Swiss National Science Foundation through the project \emph{Lattice Algorithms and Integer Programming} (Nr. 185030) and a Doc.Mobility scholarship within the programm \emph{Integer and Lattice Programming}.} \\
	EPFL\\
	{\small \texttt{moritz.venzin@epfl.ch}}
}
\date{\today}
\title{Approximate \cvp[]  in time $2^{0.802 \, n}$ - now in any norm!} 
\begin{document} 
\maketitle
\begin{abstract}
  \noindent 
We show that a constant factor approximation of the shortest and closest lattice vector problem in any norm can be computed in time $2^{0.802\, n}$. This contrasts the corresponding $2^n$ time, (gap)-SETH based lower bounds for these problems that even apply for small constant approximation.

For both problems, \svp[] and \cvp[], we reduce to the case of the Euclidean norm. A key technical ingredient in that reduction is a twist
of Milman's construction of an $M$-ellipsoid which approximates any symmetric convex body $K$ with an ellipsoid $\mathcal{E}$
so that $2^{\varepsilon n}$ translates of a constant scaling of $\mathcal{E}$ can cover $K$ and vice versa.
\end{abstract}

\section{Introduction}
\begin{figure}[h]\label{cvpsvpexplained}
	\begin{center}
		\includegraphics[width=1.0\textwidth]{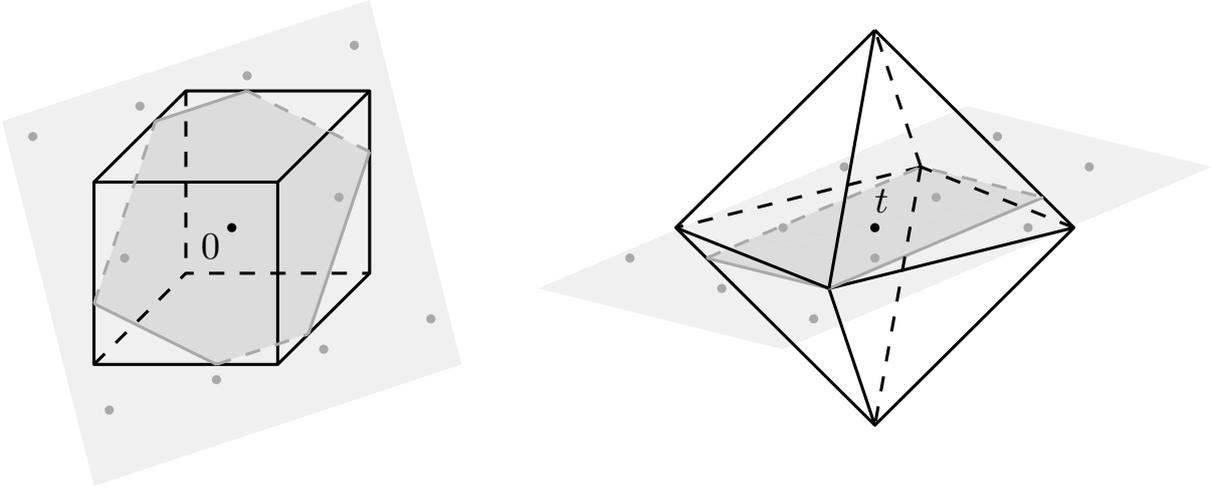}
	\end{center}
	\caption{The shortest and the closest vector problem.} 
\end{figure}
\noindent For some basis $ B \in \R^{d \times n}$, the $d$ dimensional lattice $\L$ of rank $n$ is a discrete subgroup of $\R^d$ given by
\begin{displaymath}
	\L(B) = \{Bx \st x \in \Z^n\}.
\end{displaymath}
\noindent In this work, we will consider the \emph{shortest vector problem} (\svp[]) and the \emph{closest vector problem} (\cvp[]), the two most important computational problems on lattices. Given some lattice, the shortest vector problem is to compute a shortest \emph{non-zero} lattice vector. When in addition some target $t \in \R^d$ is given, the closest vector problem is to compute a lattice vector closest to $t$. \\
Here, "short" and "close" are defined in terms of a given norm $\|\cdot\|_K$, induced by some symmetric convex body $K\subseteq \R^d$ with $\bm{0}$ in its interior. Specifically, $\|x\|_K = \min\{s\in \R_{\geq 0} \mid x \in s\cdot K\}$. When we care to specify what norm we are working with, we denote these problems by $\svp[K]$ and $\cvp[K]$ respectively and by $\svp[p]$ and $\cvp[p]$ respectively for the important case of $\ell_p$ norms. It is important to note that the dimension, rank and span of the lattice have a considerable effect on the norm $\|\cdot\|_K$ that is \emph{induced} on $\L$. For the shortest vector problem, the norm induced on $\L$ corresponds to $K$ \emph{intersected} with the span of $\L$, see Figure \ref{cvpsvpexplained} for an illustration. For different $n$ and different rotations of the lattice, these resulting convex bodies vary considerably. However, when the convex body $K$ is centered in the span of the lattice, these sections of $K$ still satisfy the required properties to define a norm. In particular, up to changing the norm, any ($\alpha$-approximation of the) shortest vector problem in dimension $d \geq n$ can be directly reduced to ($\alpha$-approximate) \svp[] with $d=n$. The situation is slightly more delicate for \cvp[]. Whenever $t \notin \text{span}(\L)$, the function measuring the distance to $t$, i.e. $\|t- \,\cdot\,\|_K$, can be asymmetrical on $\text{span}(\L)$, meaning it does not define a norm on $\text{span}(\L)$. This can be seen by lifting $t$ and the cross-polytope with it in Figure \ref{cvpsvpexplained}. However, up to a loss in the approximation guarantee, we can always take the target to lie in $\text{span}(\L)$ and consider the norm induced by $K$ intersected with $\text{span}(\L)$. More precisely, we can always reduce $(2\cdot\alpha+1)$-approximate \cvp[] to $\alpha$-approximate \cvp[] in any norm with $d = n$. This loss in the approximation factor is not surprising, seeing that exact \cvp[] under general norms is extremely versatile. In fact, \emph{Integer Programming} with $n$ variables and $m$ constraints reduces to $\cvp[\infty]$ on a $m$-dimensional lattice of rank $n$.

%
%
%
%

Both \svp[] and \cvp[] and their respective (approximation) algorithms have found considerable applications. These include Integer Programming~\cite{DBLP:journals/mor/Lenstra83, DBLP:journals/mor/Kannan87}, factoring polynomials over the rationals~\cite{LLL} and cryptanalysis~\cite{Odlyzko90therise}. On the other hand, the security of recent cryptographic schemes are based on the worst-case hardness of (approximations of) these problems~\cite{Ajtai_hardonaverage,Regev_crypto,Gentry_crypto}. In view of their importance, much attention has been devoted to understand the complexity of \svp[] and \cvp[]. In~\cite{vanEmdeBoas81,svp_hard_ajtai,micciancio2001shortest, Arora:1995:PCP:220989,DBLP:journals/combinatorica/DinurKRS03,svp_hard_khot,regev_rosen,svp_hard_regevhaviv}, both \svp[] and \cvp[] were shown to be hard to approximate to within almost polynomial factors under reasonable complexity assumptions. However, the best polynomial-time approximation algorithms only achieve exponential approximation factors~\cite{LLL,Babai1986OnLL,schnorr1987hierarchy}. This huge gap is further highlighted by the fact that these problems are in co-NP and co-AM for small polynomial factors~\cite{Goldreich_Goldwasser,Aharonov_Regev,Peikert08}.

The first algorithm to solve \svp[] and \cvp[] in any norm and even the more general integer programming problem with an exponential running time in the rank only was given by Lenstra~\cite{DBLP:journals/mor/Lenstra83}. Kannan 
\cite{DBLP:journals/mor/Kannan87} improved this to $n^{O(n)}$ time and polynomial space\footnote{For the sake of readability we omit polynomials in the encoding length of the matrix $B$ and the target vector $t$ when stating running times and space requirements.}. To this date, the running time of order $n^{O(n)}$ remains best for algorithms only using polynomial space. It took almost 15 years until Ajtai, Kumar and Sivakumar 
presented a randomized algorithm for $\text{SVP}_2$ with time \emph{and} space 
$2^{O(n)}$ and a  $2^{O(1+1/\epsilon)n}$ time and space algorithm for \acvp{2} 
\cite{DBLP:conf/stoc/AjtaiKS01,DBLP:conf/coco/AjtaiKS02}. Here, $c$-\cvp[] is the problem of finding a lattice vector, whose distance to the target is at most $c$ times the minimal distance. 
Blömer and Naewe~\cite{DBLP:journals/tcs/BlomerN09} extended the randomized sieving algorithm of Ajtai et al. to solve $\svp[p]$ and $\cvp[p]$ respectively in $2^{O(d)}$ time and space and $O(1+1/\epsilon)^{2d}$ time and space respectively. For \cvp[\infty], using a geometric covering technique, Eisenbrand et al.~\cite{DBLP:conf/compgeom/EisenbrandHN11} improved this to $O(\log(1+1/\epsilon))^d$ time. This covering idea was adapted in \cite{cvp_m_and_m} to all (sections of) $\ell_p$ norms. Their algorithm for \acvp{p} requires time $2^{O_p(n)}(1+1/\epsilon)^{n/\min(2,p)}$ and is based on the current state-of-the-art, $2^{O(n)}(1 + 1/\epsilon)^n$ deterministic time CVP solver for general (even asymmetric) norms from Dadush and Kun~\cite{DBLP:journals/toc/DadushK16}. 

Currently, exact and singly-exponential time algorithms for \cvp[] are only known for the $\ell_2$ norm. The first such algorithm was developed by~\cite{DBLP:conf/stoc/MicciancioV10} and is \emph{deterministic}. In fact, this algorithm was also the first to solve \svp[2] in deterministic time (as there is a efficient reduction from \svp[] to \cvp[], \cite{cvp_harderthan_svp}) and has been instrumental to give deterministic algorithms for \svp[] and \acvp{}, \cite{DBLP:conf/focs/DadushPV11, DBLP:journals/toc/DadushK16}. Currently, the fastest exact algorithms for \svp[2] and \cvp[2] run in time and space $2^{n}$ and are based on Discrete Gaussian Sampling~\cite{svp_ADRS,gaussian_sampling,DBLP:conf/soda/AggarwalS18}.

Recently there has been exciting progress in understanding the \emph{fine-grained complexity} of exact and constant approximation algorithms for \svp[] and \cvp[]~\cite{cvp_hard_seth,svp_hard_seth,aggarwal2019fine}.  Under the assumption of the \emph{strong exponential time hypothesis (SETH)} and for  $p\neq 0 \pmod{2}$,  exact \cvp[p] and \svp[\infty] cannot be solved in time $2^{(1-\epsilon)n}$. For a fixed $\epsilon > 0$, the dimension of the lattice can be taken linear in $n$, i.e. $d = O_\epsilon(n)$. Under the assumption of a \emph{gap-version} of the  strong exponential time hypothesis \emph{(gap-SETH)} these lower bounds also hold for the approximate versions of \cvp[p] and \svp[\infty]. More precisely, in our setting these results read as follows. For each $\epsilon > 0$ and for some norm $\|\cdot\|_K$ there exists a constant $\gamma_ {\epsilon}>1$ such that there exists no $2^{(1-\epsilon)n}$ algorithm that computes a $\gamma_{\epsilon}$-approximation of \svp[K] and \cvp[K] (where the corresponding target lies in the span of the lattice).

Until very recently, the fastest approximation algorithms for $\svp[p]$ and $\cvp[p]$ did not match these lower bounds by a large margin, even for large approximation factors~\cite{DBLP:conf/latin/Dadush12,svp_l_infty,svp_mukh_l_p}. The only exception was \svp[2] (where no strong, fine-grained lower bound is known) where a constant factor approximation is possible in time $2^{0.802 n}$ and space $2^{0.401 n}$, see~\cite{svp_mic_voulg,sieving2PS, sieving2LWXZ, AggarwalUrsuVaudenay2019}. Last year, Eisenbrand and Venzin presented a $2^{0.802 n + \epsilon d}$ algorithm for $\svp[p]$ and $\cvp[p]$ for all $\ell_p$ norms~\cite{EisenbrandVenzin}. Their algorithm exploits a specific covering of the Euclidean ($\ell_2$) norm-ball by $\ell_p$ norm-balls and uses the fastest (sieving) algorithm for \svp[2] as a subroutine. This approach was then further extended to yield generic, $2^{\epsilon d}$ time reductions from $\svp[q]$ to \svp[p], $q \geq p \geq 2$ and \cvp[p] to \cvp[q], $q \geq p$, see~\cite{reductions_svp_cvp_davidowitz}. While this improved previous algorithms for \svp[p] and \cvp[p] (and even constant factor approximate \cvp[2]), these techniques \emph{only} apply to the very specific case of $\ell_p$ norms with the added restriction that the dimension $d$ is small, further accentuating the issue rank versus dimension. For any other norm and even for $\ell_p$ norms with, say, $d = \Omega(n\cdot\log(n))$ their approach yields no improvement.
\medskip 

\noindent
In this work, we close this gap. Specifically, for any $d$-dimensional lattice of rank $n$ and for any norm, we show how to solve constant factor approximate $\cvp$ and $\svp$ in time $2^{0.802 n}$. 

\begin{theorem*}
  For any lattice $\L \subseteq \mathbb{R}^d$ of rank $n$, any norm $\|\cdot\|_K$ on $\mathbb{R}^d$ and for each $\epsilon>0$, there exists a constant $\gamma_ {\epsilon}$ such that a $\gamma_ {\epsilon}$-approximate solution to \cvp[K] and \svp[K] can be found in time $2^{(0.802 + \epsilon) n}$ and space $2^{(0.401 + \epsilon)n}$. 
\end{theorem*}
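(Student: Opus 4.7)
My plan is to reduce constant-approximate $\svp[K]$ and $\cvp[K]$ for an arbitrary norm to their Euclidean counterparts, and then invoke the known $2^{0.802\,n}$ time, $2^{0.401\,n}$ space algorithms for approximate $\svp[2]$ and $\cvp[2]$. Using the rank-versus-dimension reduction already discussed in the introduction, I would first assume at the cost of an absolute constant loss in approximation that $d=n$ and, for CVP, that the target lies in $\mathrm{span}(\L)$, so that we may work with a full-dimensional symmetric convex body $K \subseteq \R^n$. The core tool is the twisted $M$-ellipsoid promised by the abstract: an ellipsoid $\mathcal{E}$ together with absolute constants $c, c' = O(1)$ such that $K$ is covered by at most $2^{\epsilon n}$ translates of $c \mathcal{E}$ and $\mathcal{E}$ is covered by at most $2^{\epsilon n}$ translates of $c' K$. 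Applying the linear map $T$ that sends $\mathcal{E}$ to the Euclidean unit ball turns $\|\cdot\|_\mathcal{E}$ into $\|\cdot\|_2$, and the two covering bounds let us pass between $K$-balls and $\ell_2$-balls at the cost of enumerating $2^{\epsilon n}$ translates.

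For $\cvp[K]$, I would guess the optimum distance $r = \mathrm{dist}_K(\L,\vt)$ up to a factor of $2$ (polynomially many values suffice) and cover $r K$ by $2^{\epsilon n}$ translates $\vc_i + r c \mathcal{E}$. For the correct guess, the optimal lattice vector $\vv^{*}$ satisfies $\vv^{*} - \vt \in \vc_i + r c \mathcal{E}$ for some $i$, so $\|\vv^{*} - \vt - \vc_i\|_\mathcal{E} \le r c$. For every such $i$ I call the constant-approximate $\ell_2$-CVP algorithm on $T\L$ with shifted target $T(\vt + \vc_i)$, obtaining a candidate $\vv_i \in \L$ with $\|\vv_i - (\vt + \vc_i)\|_\mathcal{E} = O(r)$, and output the best candidate with respect to $\|\cdot\|_K$. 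The triangle inequality $\|\vv_i - \vt\|_K \le \|\vc_i\|_K + \|\vv_i - \vt - \vc_i\|_K$, combined with $\vc_i \in rK$ and the dual covering applied to translate the $\mathcal{E}$-bound into a $K$-bound, then yields a constant-factor approximation. The total running time is $2^{\epsilon n} \cdot 2^{0.802\,n} = 2^{(0.802+\epsilon)n}$, and space is inherited from the inner Euclidean routine. For $\svp[K]$, I would use Kannan's standard reduction from $\svp[]$ to $\cvp[]$ on $O(n)$ cosets of index-$2$ sublattices, which preserves the approximation quality up to polynomial overhead.

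The principal obstacle I foresee is the construction and analysis of the twisted $M$-ellipsoid itself. Milman's classical construction gives $N(K,\mathcal{E}) \cdot N(\mathcal{E},K) \le 2^{O(n)}$; pushing both covering numbers down to $2^{\epsilon n}$ at the cost of constant scalings $c, c'$ is the main new technical content and, I would expect, proceeds via a duality/covering argument in the spirit of Milman--Pisier together with a careful rescaling so that $N(K, c\mathcal{E})$ and $N(\mathcal{E}, c' K)$ shrink simultaneously as $c, c'$ grow. A secondary subtlety is that the bound $N(\mathcal{E}, c' K) \le 2^{\epsilon n}$ does \emph{not} entail an inclusion $\mathcal{E} \subseteq O(1)\cdot K$, so translating an $\ell_2$-guarantee into a $K$-guarantee for $\vv_i - \vt - \vc_i$ seems to require a second enumeration layer over covers of $c\mathcal{E}$ by translates of $K$; this contributes at most another $2^{\epsilon n}$ factor and is absorbed into the stated bounds. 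Finally, the whole pipeline must be made constructive in time $2^{\epsilon n}$, which I would expect to follow from a randomized or discretized implementation of the ellipsoid construction analogous to the algorithmic $M$-ellipsoids of Dadush--Vempala.
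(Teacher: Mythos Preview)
Your CVP reduction has a genuine gap, and it is exactly the point the paper singles out as the main difficulty. After calling the $\ell_2$-CVP oracle you obtain $v_i$ with $\|v_i - t - c_i\|_{\mathcal{E}} \le O(r)$, and you correctly note that this does \emph{not} give $\|v_i - t - c_i\|_K \le O(r)$. Your proposed fix, a ``second enumeration layer'' over a cover of $O(r)\cdot\mathcal{E}$ by translates $y_j + O(r)\cdot K$, does not work: it yields $\|v_i - t - c_i - y_j\|_K \le O(r)$, but the output is still the lattice vector $v_i$, so you need $\|c_i + y_j\|_K = O(r)$. While $c_i \in rK$, the centers $y_j$ lie in $O(r)\cdot\mathcal{E}$, and the covering bound $N(\mathcal{E},c'K) \le 2^{\varepsilon n}$ only forces $\mathcal{E} \subseteq 2^{\varepsilon n}\cdot c'\cdot K$ (cf.\ Lemma~\ref{lemma:length}). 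Hence $\|y_j\|_K$ can be as large as $2^{\varepsilon n}\cdot O(r)$, and you end up with a $2^{\varepsilon n}$-approximation rather than a constant one. The paper states explicitly (right after Theorem~\ref{cvp_to_sieving_2}) that a black-box reduction from approximate $\cvp[K]$ to approximate $\cvp[2]$ along these lines is \emph{not known}.

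What the paper does instead for CVP is to open up the $\ell_2$ sieving algorithm and exploit a structural property of its output distribution (Theorem~\ref{main_procedure}): each sample is, with probability $\ge 2^{-\varepsilon n}$, of the form $u$ or $u+\vs$ with a fair coin flip, where $\vs$ is the (unknown) short vector arising from Kannan's embedding of the CVP instance. Two independent samples $v_1,v_2$ drawn this way land in the same translate of $O(1)\cdot K$ with probability $\ge 2^{-O(\varepsilon)n}$ (pigeonhole over the dual cover of $B_2^n$ by $K$), and then the coin flips make $(v_1+\vs)-v_2$ recover $\vs$ plus a vector of bounded $K$-norm. This sidesteps the issue above entirely, because the $K$-norm bound is obtained on a \emph{difference} of two i.i.d.\ samples, not on a single oracle answer. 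Your SVP route (via $\cvp[K]$) inherits the same gap; the paper's SVP argument is also different, reducing $\svp[K]$ directly to a $\cvp[2]$ oracle by combining the cover with lattice sparsification and the same pigeonhole-on-differences trick.
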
 

\noindent We note that the constant $0.802$ in the exponent can be replaced by a slightly smaller number, \cite{kabatiansky1978bounds}. Thus, we indeed get the running time as advertised in the title. \\
For the case of the shortest vector problem, we can significantly generalize this result.

\begin{theorem*}
	For any lattice $\L \subseteq \mathbb{R}^d$ of rank $n$, any norm $\|\cdot\|_K$ on $\mathbb{R}^d$ and for each $\epsilon>0$, there exists a constant $\gamma_ {\epsilon}$ such that there is a $2^{\epsilon n}$ time, randomized reduction from $(\alpha\cdot\gamma_ {\epsilon})$-approximate \svp[K] to an oracle for $\alpha$-approximate \cvp[2] (or even $\alpha$-approximate \cvp[] in any norm).
\end{theorem*}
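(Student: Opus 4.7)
The plan is to reduce $(\alpha \gamma_\epsilon)$-approximate \svp[K] to $\alpha$-approximate CVP queries in the Euclidean norm (or any norm) by leveraging the twisted $M$-ellipsoid construction developed earlier in the paper. Let $\mathcal{E}$ be the $M$-ellipsoid for $K$, together with sets $N_1, N_2$ of size at most $2^{\epsilon n}$ and a constant $c_\epsilon$ such that $K \subseteq N_1 + c_\epsilon \mathcal{E}$ and $\mathcal{E} \subseteq N_2 + c_\epsilon K$. After a linear change of coordinates sending $\mathcal{E}$ to the Euclidean unit ball, any CVP query with respect to the $\mathcal{E}$-norm is literally a CVP query in $\ell_2$.

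First I would estimate $r$ within a constant factor of $\lambda_1^K(\L)$ using standard LLL-type preprocessing; only polynomially many candidate values $r$ need be tried by binary search. For each candidate $r$, enumerate the $2^{\epsilon n}$ centers $t \in r N_1$, and for each such $t$ issue one $\alpha$-approximate CVP-$\ell_2$ query at target $t$, obtaining a lattice vector $w_t$. The idea is that the $K$-shortest vector $v^*$ lies in $t^* + c_\epsilon r \mathcal{E}$ for some $t^* \in r N_1$, so the oracle guarantees $\|w_{t^*} - t^*\|_\mathcal{E} \le \alpha c_\epsilon r$ and hence $w_{t^*} - v^* \in (1+\alpha) c_\epsilon r \cdot \mathcal{E}$.

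The hard part will be converting this $\mathcal{E}$-norm bound on $w_{t^*} - v^*$ into a $K$-norm bound on the output, since the covering $\mathcal{E} \subseteq N_2 + c_\epsilon K$ does not yield the pointwise inclusion $\mathcal{E} \subseteq O(1) \cdot K$ (as one sees from the case of the $\ell_\infty$ ball, whose $M$-ellipsoid is a Euclidean ball of large $\ell_\infty$-width). To bridge this gap, I would issue a second batch of $2^{\epsilon n}$ CVP queries at targets drawn from $(1+\alpha) c_\epsilon r \cdot N_2$, producing lattice correction vectors $u_s$ approximating the $\mathcal{E}$-to-$K$ offsets $s$ furnished by the vice-versa covering. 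Returning the combination $w_t - u_s$ of smallest $K$-norm over all $2^{2\epsilon n}$ pairs should then give a lattice vector of $K$-norm at most $\alpha \gamma_\epsilon \cdot \lambda_1^K(\L)$, where $\gamma_\epsilon$ absorbs the two covering constants. Halving $\epsilon$ in the covering construction keeps the total number of queries at $2^{\epsilon n}$, and the randomness mentioned in the statement enters only through the probabilistic $M$-ellipsoid construction.
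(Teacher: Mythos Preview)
Your proposal has a genuine gap in the ``second batch'' step, and it also misses the mechanism the paper uses to force the oracle to produce useful output.

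First, the second batch does not do what you want. You have $w_{t^*} - v^* \in (1+\alpha)c_\epsilon r\,\mathcal{E}$, and via $\mathcal{E} \subseteq N_2 + c_\epsilon K$ there is some $s \in N_2$ with $w_{t^*} - v^* - (1+\alpha)c_\epsilon r\cdot s \in (1+\alpha)c_\epsilon^2 r\,K$. But $(1+\alpha)c_\epsilon r\cdot s$ is an arbitrary real point, not a lattice point. Querying the $\ell_2$-CVP oracle at that target returns some $u_s$ with $\|u_s - (1+\alpha)c_\epsilon r\cdot s\|_{\mathcal{E}} \le \alpha\cdot d_{\mathcal{E}}((1+\alpha)c_\epsilon r\cdot s,\L)$; you have no control on this distance, and even if you did, the error is again measured in the $\mathcal{E}$-norm, not the $K$-norm. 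So $w_{t^*}-u_s$ is only guaranteed to be close to $v^*$ in $\mathcal{E}$-norm, which is exactly where you started. The argument is circular: you cannot convert an $\mathcal{E}$-bound into a $K$-bound by issuing more $\mathcal{E}$-norm CVP queries. Relatedly, nothing in your scheme prevents the oracle from always returning $w_t=\bm{0}$ (which is a perfectly valid $\alpha$-approximate answer whenever $\|t\|_{\mathcal{E}}\le \alpha c_\epsilon r$), nor from returning $u_s = w_t$, in which case every difference $w_t - u_s$ is zero.

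The paper's reduction avoids both problems by a different mechanism. It never tries to pin down $v^*$ directly. Instead it uses \emph{lattice sparsification}: for a single random target $t$ with $v^*\in t+B_2^n$, it repeatedly passes to random shifted sublattices $u+\L'$ in which $v^*$ survives but the previously returned vectors are killed, thereby forcing the $\alpha$-\cvp[2] oracle to output at least $2^{\epsilon n}+1$ \emph{distinct} lattice points inside $t+\alpha B_2^n$. Now the second covering $B_2^n \subseteq N_2 + c_\epsilon K$ is used only through pigeonhole: since $t+\alpha B_2^n$ is covered by $2^{\epsilon n}$ translates of $\alpha c_\epsilon K$, two of the returned lattice vectors lie in the same translate, and their nonzero difference has $\|\cdot\|_K$-norm at most $2\alpha c_\epsilon$. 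The role of the second covering is thus a counting argument on many distinct lattice points, not a pointwise correction via further oracle calls; and sparsification is precisely the missing idea that guarantees those many distinct points exist.
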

\noindent 
Our main idea is to cover $K$ by a special class of ellipsoids to obtain the approximate closest vector by using an approximate closest vector algorithm with respect to $\ell_2$. This covering idea draws from \cite{EisenbrandVenzin} and is also similar to the approach of Dadush et al. in \cite{DBLP:conf/focs/DadushPV11, DBLP:journals/toc/DadushK16}. Specifically, for any $\epsilon>0$, one can compute some ellipsoid $\mathcal{E}$ of $K$, so that $K$ can be covered by $2^{\epsilon n}$ translates of $\mathcal{E}$, and, conversely, $\mathcal{E}$ can be covered by $2^{\epsilon n}$ translates of $c_\epsilon\cdot  K$. Here, $c_\epsilon$ is a constant that only depends on $\epsilon$. Such a covering will be sufficient to reduce approximate $\svp[K]$ to $2^{\epsilon n}$ calls to an oracle for (approximate) \cvp[2]. Specifically, using an oracle for $\alpha$-\cvp[2], we will obtain a $\alpha \cdot (2\cdot c_\epsilon)$ approximation to the shortest vector problem with respect to $\|\cdot\|_K$. Similarly, given an oracle for approximate \cvp[Q], we can use these covering ideas twice with $K$ and $Q$ respectively to obtain the desired reduction. These reductions are randomized and use lattice sparsification.
This covering idea can be used to solve constant factor approximate \cvp[K] as well. However, we can no longer assume to only have access to an oracle for the approximate closest vector problem. Instead, we will have to use one very specific property of the currently fastest and randomized approximate algorithm for \svp[2] that was first exploited in \cite{EisenbrandVenzin}.

\section{Approximate \svp[] in $2^{0.802 n}$ time}
\label{sec:coverings}

In this section, we describe our main geometric observation and how it leads to an algorithm for the approximate shortest and closest vector problem respectively. In a first part, we state the main geometric theorem and informally present how it leads to a reduction from approximate $\svp[K]$ to an oracle for approximate \cvp[2]. In the second part, we make this formal using lattice sparsification and some further geometric considerations. Finally, in a third part, we show how to replace the oracle for approximate $\cvp[2]$ by an oracle for $\cvp[]$ under any norm.

\subsection{Covering $K$ with few ellipsoids and vice versa}\label{subsec:svp_idea}

The (approximate) shortest vector problem in the norm $\|\cdot\|_K$ can be rephrased as follows.
\begin{center}
	\textit{Does $K$ contain a lattice point different from $\bm{0}$?}
\end{center}
This follows by guessing the length $\ell$ of the shortest non-zero lattice vector, scaling the lattice by $1/\ell$ and then confirm the right guess by finding this lattice vector in $K$. By guessing we mean to try out all possibilities for $\ell$, this can be limited to a polynomial in the relevant parameters.\\
Imagine now that one can cover $K$ using a collection of (Euclidean) balls $\mathcal{B}_i$ of any radii such that
\begin{displaymath}
	K \subseteq \bigcup_{i=1}^N (x_i + \mathcal{B}_i) \subseteq c\cdot K.
\end{displaymath}
To then find a $c$-approximation to the shortest vector, one could use a solver for \cvp[2] using targets $x_1 \cdots, x_N$. However, this naïve approach is already doomed for $K = B_\infty^n$. To achieve a constant factor approximation, $N=n^{O(n)}$ translates of (any scaling of) the Euclidean ball $B_2^n := \{x \in \R^n \mid \|x\|_2 \leq 1\}$ are required and can only be brought down to $N=2^{O(n)}$ if one is willing to settle for $c = O(\sqrt{n})$~\cite{John1948}. \\
However, if we impose a second condition on these balls (and even relax the above condition), this idea will work. Specifically, if we can cover $K=-K \subseteq \R^n$ by $N$ translates of $c_1 \cdot B_2^n$ \emph{and} $B_2^n$ by $N$ translates of $c_2\cdot K$,
\begin{displaymath}
	K \subseteq \bigcup_{i=1}^N (x_i + c_1 \cdot B_2^n) \quad \text{and} \quad B_2^n \subseteq \bigcup_{i=1}^N(y_i + c_2 \cdot K),
\end{displaymath}
we can solve $\alpha\cdot (c_1 \cdot c_2)$-approximate \svp[K] using (essentially) $N$ calls 
to a solver for $\alpha$-$\cvp[2]$. It turns out that such a covering is always possible for any symmetric and convex $K$, and, in particular, the number of translates $N$ can be made smaller than $2^{\epsilon n}$ for any $\epsilon > 0$. To be precise, the covering is by ellipsoids (affine transformations of balls), but we can always apply a linear transformation to restrict to Euclidean unit norm-balls, i.e. we may take $c_1 = 1$. 
The precise properties of the covering are now stated in the following theorem, where we denote by $N(T,L)$ the \emph{translative covering number} of $T$ by $L$, i.e. the least number of translates of $L$ required to cover $T$.

\begin{theorem}\label{covering:construction}
	For any symmetric and convex body $K \subseteq \R^n$ and for any $\epsilon > 0$, there exists an (invertible) linear transformation $T_\epsilon: \R^n \rightarrow \R^n$ and a constant $c_\epsilon \in \R_{>0}$ such that
	\begin{enumerate}
		\item $N(T_\epsilon(K), B_2^n) \leq 2^{\epsilon n}$ (even $\vol(T_\epsilon (K)+B_2^n) \leq 2^{\epsilon n}\cdot \vol(B_2^n)$) and \label{covering:1}
		\item $N(B_2^n, c_\epsilon\cdot T_\epsilon(K)) \leq 2^{\epsilon n}$ (even $\vol(B_2^n + c_\epsilon \cdot T_\epsilon(K)) \leq 2^{\epsilon n}\cdot\vol(c_\epsilon \cdot T_\epsilon(K)))$.\label{covering:2}
	\end{enumerate}
	The linear transformation $T_\epsilon$ can be computed in (randomized)  $n^{O(\log(n))}$ time (times some polynomial in the encoding length of $K$).
\end{theorem}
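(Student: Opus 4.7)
\medskip
\noindent\textbf{Proof proposal.} The natural tool here is Pisier's refinement of Milman's $M$-ellipsoid theorem, the \emph{$\alpha$-regular $M$-position}: for every $\alpha > 1/2$ there is a constant $c_\alpha$ such that every symmetric convex body $K \subseteq \R^n$ admits a linear image $\widetilde{K} = T_0 K$, normalised to $\vol(\widetilde{K}) = \vol(B_2^n)$, such that for every $t \geq 1$
\[
\vol(\widetilde{K} + t B_2^n) \leq \exp\bigl(c_\alpha n / t^\alpha\bigr) \cdot \vol(t B_2^n) \quad \text{and} \quad \vol(t \widetilde{K} + B_2^n) \leq \exp\bigl(c_\alpha n / t^\alpha\bigr) \cdot \vol(t \widetilde{K}).
\]
The plan is to apply this to $K$ and then rescale so that the expansion term becomes at most $2^{\epsilon n}$.

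Concretely, I would fix $\alpha = 1$ and choose $t_\epsilon := c_1 / (\epsilon \ln 2)$, which forces $\exp(c_1 n / t_\epsilon) \leq 2^{\epsilon n}$. Setting $T_\epsilon := T_0 / t_\epsilon$ and $c_\epsilon := t_\epsilon^2$, one has $T_\epsilon K = \widetilde{K}/t_\epsilon$ and $c_\epsilon T_\epsilon K = t_\epsilon \widetilde{K}$, so item~\ref{covering:1} follows from
\[
\vol(T_\epsilon K + B_2^n) = t_\epsilon^{-n} \vol(\widetilde{K} + t_\epsilon B_2^n) \leq 2^{\epsilon n} \vol(B_2^n),
\]
and item~\ref{covering:2} from the symmetric bound on $\vol(B_2^n + t_\epsilon \widetilde{K})$. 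The covering-number statements are then recovered from the volume estimates via the standard inequality $N(A, L) \leq \vol(A + L/2) / \vol(L/2)$ for symmetric convex $L$, absorbing a multiplicative $2^n$ into a marginally larger (still $O(\epsilon)$) exponent.

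For the algorithmic part, I would lean on the Dadush--Vempala randomized $n^{O(\log n)}$ algorithm for computing a Milman-style $M$-ellipsoid, iterated along the template of Pisier's proof. Intuitively one replaces a single $M$-step by an $O(\log(1/\epsilon))$-depth recursion that successively covers $K$ by rescaled ellipsoids, refining the Euclidean structure and driving the covering-number expansion exponent down from an absolute constant to $\epsilon$.

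I expect the main obstacle to be the algorithmic extraction rather than the geometry: Pisier's original construction of the regular $M$-position relies on a non-constructive entropy-duality argument, so turning it into an explicit randomized procedure while simultaneously retaining the tight $2^{\epsilon n}$ bound, producing the \emph{volume} versions (not merely the covering-number versions), and fitting inside $n^{O(\log n)}$ time requires care in how the basic $M$-ellipsoid subroutine is iterated and how the covering parameters compose across the $O(\log n)$ levels of recursion.
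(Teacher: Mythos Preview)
Your existence argument is correct and, in fact, cleaner than the paper's: Pisier's regular $M$-position with $\alpha=1$ immediately gives both volume inequalities after the rescaling you describe, with $c_\epsilon = \Theta(1/\epsilon^2)$. The paper explicitly acknowledges this in a footnote, noting that Pisier's result implies existence. The paper instead proves existence constructively via Milman's \emph{isomorphic symmetrization}: starting from $K_0=K$, at each step one puts $K_{j-1}$ into the position given by the $MM^\circ$ estimate, then intersects with a ball of radius $M^\circ \alpha_{j-1}$ and takes the convex hull with a ball of radius $1/(M\alpha_{j-1})$. This drives the Banach--Mazur distance $d(K_j,B_2^n)$ down roughly quadratically while distorting $\vol(K_j+P)$ by at most $2^{Cn/\alpha_{j-1}^2}$ per step; after $O(\log n)$ steps one reaches a body with $d(K_t,B_2^n)\le C/\epsilon$, and the accumulated volume distortion is $2^{O(\epsilon)n}$ because $\sum 1/\alpha_j^2 = O(\epsilon)$. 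The paper's contribution is the observation that one can \emph{stop early} (at distance $\Theta(1/\epsilon)$ rather than $O(1)$), which is what produces the $2^{\epsilon n}$ rather than $2^{O(n)}$.

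The gap in your proposal is the algorithmic part. You correctly flag that Pisier's entropy-duality/complex-interpolation proof is non-constructive, but your fix --- ``iterate Dadush--Vempala along the template of Pisier's proof'' with an $O(\log(1/\epsilon))$-depth recursion --- is not a workable plan. The Dadush--Vempala algorithm computes a single $M$-ellipsoid with covering numbers $2^{O(n)}$; there is no obvious way to iterate \emph{that} primitive to drive the exponent down to $\epsilon$, and Pisier's argument does not decompose into such steps. What Dadush--Vempala actually made algorithmic is precisely the isomorphic-symmetrization iteration above, and that is what the paper uses: each round solves a convex program for the $MM^\circ$-optimal position and builds a separation oracle for $K_j$ from one for $K_{j-1}$. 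The $n^{O(\log n)}$ running time arises because there are $O(\log n)$ rounds (the Banach--Mazur distance starts at $\sqrt{n}$) and each nested separation-oracle call costs a polynomial factor --- not from an $O(\log(1/\epsilon))$ recursion on an $M$-ellipsoid subroutine.
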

      
The volume estimate $\text{Vol}(T_\epsilon (K)+B_2^n) \leq 2^{\epsilon n}\cdot \text{Vol}(B_2^n)$ in $(\ref{covering:1})$ is stronger than $N(T_\epsilon(K),B_2^n)\leq 2^{\epsilon n}$, similar in (\ref{covering:2}). It makes the covering of $T_\epsilon(K)$ by translates of $B_2^n$ constructive. 
Indeed, any point inside $T_\epsilon(K)$ will be covered with probability at least $2^{-\epsilon n}$ if we sample a random point within $T_\epsilon(K) + B_2^n$ and place a copy of $B_2^n$ around it. Repeating this for $O(n\cdot \log(n) \cdot 2^{\epsilon n})$ iterations yields, with high probability, a full covering of $T_\epsilon(K)$ by translates of $B_2^n$. See \cite{Marton:covering} for details. \\
We defer the proof of Theorem \ref{covering:construction} to Section~\ref{sec:ellipsoid} and first discuss how we intend to use it to solve the shortest vector problem in arbitrary norms. To do so, we first fix some notations and do some simplifications. We will denote by $\vs$ a shortest, non-zero lattice vector of the given lattice $\L\subseteq \R^n$ with respect to $\|\cdot\|_K$, the norm under consideration. We will assume that $1-1/n \leq \|\vs\|_K \leq 1$, i.e. $\vs \in K \setminus (1-1/n)\cdot K$. We fix $\epsilon > 0$, and denote by $T_\epsilon$ be the linear transformation that is guaranteed by Theorem~\ref{covering:construction}. Up to replacing $\L$ by $T_\epsilon^{-1}(\L)$ and $K$ by $T_\epsilon^{-1}(K)$, we can also assume that $T_\epsilon = \text{Id}$ ($\|\cdot\|_K = \|T_\epsilon^{-1}(\cdot)\|_{T_\epsilon^{-1}(K)}$). \\
We can now describe how we will use the covering guaranteed by Theorem \ref{covering:construction}. Since $K$ is covered by translates of $B_2^n$, there is some translate that holds $\vs$. We denote it by $t + B_2^n$.
Now, suppose there is a procedure that either returns $\vs$ or generates at least $2^{\epsilon n}+1$ \emph{distinct} lattice vectors lying in $t + \alpha\cdot B_2^n$. For the latter case, while these vectors may all have very large norm with respect to $\|\cdot\|_K$ or may even equal the zero vector, by taking pairwise differences, we are still able to find a $\alpha \cdot (2\cdot c_\epsilon)$-approximation to the shortest vector. Indeed, by property (\ref{covering:2}) of Theorem \ref{covering:construction}, $t + \alpha\cdot B_2^n$ can be covered by fewer than $2^{\epsilon n}$ translates of $(\alpha \cdot c_\epsilon)\cdot K$. Thus, one translate of $(\alpha \cdot c_\epsilon)\cdot K$ must hold \emph{two distinct} lattice vectors. Their pairwise difference is then a $\alpha \cdot((1-1/n)^{-1}\cdot 2\cdot c_\epsilon)$-approximation to the shortest vector $\vs$. This is depicted in Figure \ref{fig:2}.

\begin{figure}[h]\label{covering_by_ellipsoids}
	\begin{center}
		\includegraphics[width=0.7\textwidth]{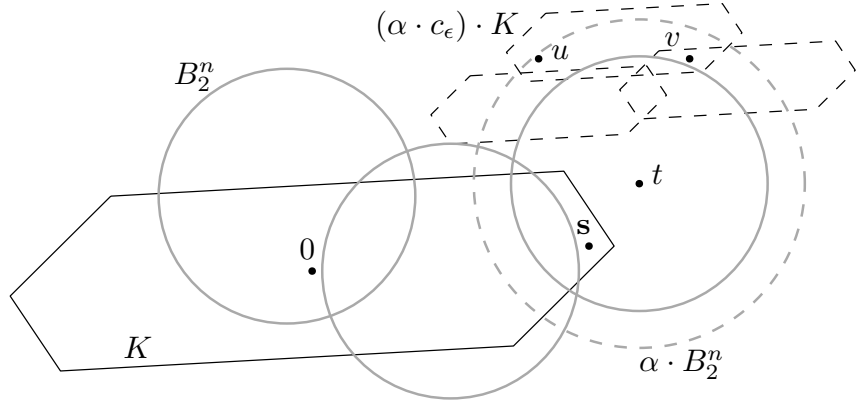}
	\end{center}
	\caption{The covering of $K$ by translates of the Euclidean norm ball. One translate of $(\alpha\cdot c_\epsilon)\cdot K$ covering $t + \alpha\cdot B_2^n$ holds two lattice vectors $u$ and $v$. Their difference $u-v$ is a $(\alpha \cdot \gamma_\epsilon)$-approximation to the shortest vector problem.} \label{fig:2}
\end{figure}
To finish the argument, it remains to argue that such a procedure can be simulated with an oracle for $\alpha$-approximate \cvp[2] at hand. This will be achieved through the use of \emph{lattice sparsification}. This technique will allow us to delete lattice points in an almost uniform manner. Specifically, when the oracle has already returned lattice vectors $v_1, \cdots, v_N$, sparsifying the lattice ensures that with sufficiently high probability, $\vs$ is retained and $v_1, \cdots, v_N$ are deleted. This then \emph{forces} the $\alpha$-approximate \cvp[2] oracle to return a lattice vector $v_{N+1}$ distinct from $v_1, \cdots, v_N$ with $\|t-v_{N+1}\|_K \leq \alpha\cdot \|t-\vs\|_K$. 

\subsection{Approximate \svp[K] using an oracle for approximate \cvp[2]}\label{subsec:svp_formal}
In this subsection we are going to formalize the exponential time reduction from approximate \svp[] to an oracle for (approximate) \cvp[2] as outlined in the previous section. We are going to make use of the following theorem from~\cite{SteDavidowitz16}, slightly rephrased for our purpose.
\begin{theorem}\label{sparsify}
For any prime $p$, any lattice $\L \subseteq \R^d$ of rank $n$ and lattice vectors $w, v_1, v_2, \cdots, v_N \in \L$ with $v_i \notin w + p\cdot\L\setminus\{\bm{0}\}$, one can, in polynomial time, sample a shifted sub-lattice $u + \L'$ with $\L' \subseteq \L$ and $u \in \L$ such that:
\begin{displaymath}
	\Pr\big[w \in u+\L' \text{ and } v_1, \cdots, v_N \notin u+\L'\big] \geq \frac{1}{p} - \frac{N}{p^2}-\frac{N}{p^n}.
\end{displaymath}
\end{theorem}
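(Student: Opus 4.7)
The plan is to realize the shifted sublattice as a random fiber of a random linear form modulo $p$. Fix any basis $B$ of $\L$, so that each $v \in \L$ has a unique coefficient vector $x_v \in \Z^n$. Sample $z \in (\Z/p\Z)^n$ and $c \in \Z/p\Z$ uniformly and independently, and set
\[
u + \L' \;:=\; \{v \in \L : \langle z, x_v \rangle \equiv c \pmod p\}.
\]
When $z \neq \bm 0$, the map $\phi_z(v) := \langle z, x_v \rangle \bmod p$ is a surjective group homomorphism $\L \to \Z/p\Z$, so $\L' = \ker \phi_z$ is a genuine index-$p$ sublattice and $u + \L' = \phi_z^{-1}(c)$ is one of its $p$ cosets; in the degenerate case $z = \bm 0$, one simply declares failure, an event of probability $1/p^n$.

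Next, I would compute two probabilities. First, $\Pr[w \in u + \L']$: conditioning on any value of $z$, the event $c = \phi_z(w)$ has probability $1/p$ over the uniform shift $c$, so the total probability is exactly $1/p$. Second, for each $i$,
\[
\Pr[w, v_i \in u + \L'] \;=\; \Pr\big[c = \phi_z(w) \text{ and } \phi_z(w - v_i) \equiv 0\big] \;=\; \frac{1}{p}\cdot \Pr_z\big[\phi_z(w - v_i) \equiv 0 \pmod p\big].
\]
Here the hypothesis $v_i \notin w + p\L \setminus \{\bm 0\}$ is invoked to rule out the harmful case where $w - v_i$ is a \emph{nonzero} element of $p\L$; if that were the case, $\phi_z(w - v_i)$ would vanish identically in $z$ and one would only get $\Pr[w, v_i \in u + \L'] = 1/p$. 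Under the hypothesis the coordinate vector of $w - v_i$ is nonzero modulo $p$, so $\langle z, x_w - x_{v_i}\rangle \bmod p$ is uniform on $\Z/p\Z$ over uniformly random $z$ and equals $0$ with probability $1/p$, giving the desired bound $\Pr[w, v_i \in u + \L'] \le 1/p^2$.

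Combining these two estimates via inclusion--exclusion / union bound yields
\[
\Pr\big[w \in u+\L',\;v_1,\dots,v_N \notin u+\L'\big] \;\ge\; \frac{1}{p} - \frac{N}{p^2} - \frac{N}{p^n},
\]
where the final $N/p^n$ correction comes from carefully accounting for the degenerate event $z = \bm 0$ and for the small discrepancy between the uniform distribution on $(\Z/p\Z)^n$ and its restriction to nonzero elements. The main obstacle is really only the bookkeeping: one has to verify that discarding the $z = \bm 0$ outcome, together with the mild correlations between the events $\{w \in u+\L'\}$ and $\{v_i \in u+\L'\}$, never eats into the leading $1/p$ term by more than the stated corrections. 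Since the construction reduces to sampling $n+1$ integers modulo $p$ and computing a kernel basis, the entire sampling procedure runs in polynomial time as claimed.
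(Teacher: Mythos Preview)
The paper does not prove this theorem; it is quoted from \cite{SteDavidowitz16} as a black box (``the following theorem from~\cite{SteDavidowitz16}, slightly rephrased for our purpose''). Your construction---pick a uniformly random linear form $z\in(\Z/p\Z)^n$ and a uniformly random shift $c\in\Z/p\Z$, and take the coset $\phi_z^{-1}(c)$---is precisely the standard sparsification from that reference, so there is nothing in the present paper to compare against; you have reconstructed the intended argument.

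Two small points of bookkeeping are worth tightening. First, the hypothesis $v_i\notin w+p\L\setminus\{\bm 0\}$ does \emph{not} exclude $v_i=w$ (it only forbids $w-v_i$ from being a \emph{nonzero} element of $p\L$), so your sentence ``under the hypothesis the coordinate vector of $w-v_i$ is nonzero modulo $p$'' is not literally justified in that degenerate case. When $v_i=w$ the event $\{w\in u+\L',\,v_i\notin u+\L'\}$ is empty and no nontrivial lower bound is possible; in the paper's application the $v_i$ are always distinct from $w$ (the case $v_i=\vs$ is handled separately as a success), so this is an artifact of the restatement rather than a flaw in your method. Second, your accounting for the event $z=\bm 0$ naturally yields a single correction of $1/p^n$, not $N/p^n$; since $1/p^n\le N/p^n$ for $N\ge 1$, your bound is at least as strong as the one stated, and the hand-wave about ``bookkeeping'' can simply be replaced by the observation that when $z=\bm 0$ and $N\ge 1$ the desired event fails automatically, so subtracting $\Pr[z=\bm 0]=p^{-n}$ once suffices.
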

The condition $v_i \notin w + p\cdot\L\setminus\{\bm{0}\}$ is slightly inconvenient. We will deal with these type of vectors by showing that, for large enough $p$, they will be too large and will not be considered by our $\alpha$-approximate \cvp[2] oracle. This is done by the following lemma. 

\begin{lemma}\label{lemma:length}
	Let $\L$ be a lattice, let $K$ be a symmetric and convex body containing no lattice vector other than $\bm{0}$ in its interior and suppose that $N(K,B_2^n), N(B_2^n,\beta \cdot K)\leq 2^{\epsilon n}$. Then, the following three properties hold:
	\begin{enumerate}
		\item $\forall v \in p\cdot\L\setminus\{\bm{0}\}:$ $\|v\|_2 \geq (2^{-\epsilon n}/\beta)\cdot p $.
		\item $K \subseteq 2^{\epsilon n}\cdot B_2^n$
		\item $(2^{-\epsilon n}/\beta)\cdot B_2^n \subseteq K$
	\end{enumerate}
\end{lemma}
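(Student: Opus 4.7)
The plan is to establish (2) and (3) first via a simple one-dimensional chord argument from the covering hypotheses, and then to deduce (1) as an immediate consequence of (3) combined with the assumption that $K$ contains no non-zero lattice vector in its interior.

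For (2), I would fix an arbitrary $x \in K$ and observe that by symmetry and convexity the entire segment $\{tx : t \in [-1,1]\}$, of Euclidean length $2\|x\|_2$, lies in $K$. By hypothesis $K$ is covered by at most $N := 2^{\epsilon n}$ translates of $B_2^n$, and each such translate meets the line $\mathrm{span}(x)$ in a chord of length at most $2$, the diameter of $B_2^n$. Comparing one-dimensional Lebesgue measure along that line gives $2\|x\|_2 \leq 2N$, so $x \in 2^{\epsilon n}\cdot B_2^n$. For (3), I would run the same argument with the roles swapped: for any unit vector $x$ the segment $[-x,x] \subseteq B_2^n$ has length $2$, and the intersection of an arbitrary translate of $\beta K$ with $\mathrm{span}(x)$ has Euclidean length at most $\mathrm{len}(\beta K \cap \mathrm{span}(x)) = 2\beta/\|x\|_K$. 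Covering $B_2^n$ by at most $N$ translates of $\beta K$ therefore yields $2 \leq N \cdot 2\beta/\|x\|_K$, i.e.\ $\|x\|_K \leq 2^{\epsilon n}\beta$, which by one-homogeneity is exactly the inclusion $(2^{-\epsilon n}/\beta)B_2^n \subseteq K$.

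Finally, for (1), I would write any $v \in p\L\setminus\{\bm{0}\}$ as $v = p\vw$ with $\vw \in \L \setminus \{\bm{0}\}$. The hypothesis that only $\bm{0}$ among lattice vectors lies in $\mathrm{int}(K)$ yields $\|\vw\|_K \geq 1$, while the inclusion from (3) is equivalent to the norm bound $\|\vw\|_K \leq 2^{\epsilon n}\beta\cdot\|\vw\|_2$; combining these gives $\|\vw\|_2 \geq 2^{-\epsilon n}/\beta$ and hence $\|v\|_2 \geq (2^{-\epsilon n}/\beta)\,p$. There is no substantial obstacle; the only point deserving a sentence of justification is the chord-length identity $\mathrm{len}(\beta K \cap \mathrm{span}(x)) = 2\beta/\|x\|_K$, which follows at once from the description $\beta K \cap \mathrm{span}(x) = \{t x : |t|\,\|x\|_K \leq \beta\}$.
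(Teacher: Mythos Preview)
Your proof is correct and follows essentially the same approach as the paper: prove the two inclusions (2) and (3) by the one-dimensional chord-covering argument, then derive (1) from (3) and the hypothesis that $\text{int}(K)$ contains no non-zero lattice vector. The paper is terser (it just says ``the largest segment contained in $K$ cannot be covered using $2^{\epsilon n}$ translates of $B_2^n$'' and invokes convexity/symmetry for the inradius), whereas you spell out the chord-length comparison explicitly; the one small point you might add is that an \emph{arbitrary translate} of $\beta K$ meets $\mathrm{span}(x)$ in a chord no longer than the central chord, which is where the symmetry $K=-K$ together with convexity is used.
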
 
\begin{proof}
	We first show the last two properties by using the translative covering numbers. Since $N(K,B_2^n)\\\leq 2^{\epsilon n}$, we must have $K \subseteq 2^{\epsilon n}\cdot B_2^n$. To see this, we note that otherwise, the largest segment contained in $K$ cannot be covered using $2^{\epsilon n}$ translates of $B_2^n$. Conversely, since $N(B_2^n,\beta\cdot K)\leq 2^{\epsilon n}$ and the convexity of $K=-K$, the smallest inradius of $K$ cannot be smaller than $2^{-\epsilon n}/\beta$. It follows that $(2^{-\epsilon n}/\beta)\cdot B_2^n\subseteq K$.\\ 
	We can now show the first property. Let $v \in p \cdot \L \setminus \{ \bm{0}\}$. By assumption $\frac{v}{p} \notin \text{int}(K)$ and so $\frac{v}{p} \notin (2^{-\varepsilon n}/\beta)\cdot \textrm{int}(B_2^n)$, where $\textrm{int}(K)$ denotes the interior of the body $K$. This means that $\|v\|_2 \geq (2^{-\varepsilon n}/\beta)\cdot p$.
\end{proof}

We now state our randomized reduction. We rotate the lattice and $K$ so that $\text{span}(\L) = \R^n\times\{0\}^{d-n}$. Up to replacing $K$ by $K\cap\text{span}(\L)$ and deleting the last $d-n$ zeros, we may assume that $d=n$. For details, we refer to the second part of the proof of Lemma \ref{cvp:d=n}. 
We fix some $\epsilon > 0$ and compute the invertible linear transformation $T_\epsilon$ guaranteed by Theorem \ref{covering:construction}. Up to applying the inverse of $T_\epsilon$ and scaling the lattice, we may assume that $T_\epsilon=\text{Id}$ and $1-1/n \leq \|\vs\|_K\leq 1$, where $\vs$ is a shortest lattice vector with respect to the norm $\|\cdot\|_K$. One iteration of the reduction will consist of the following steps and will succeed with probability $2^{-O(\epsilon) n}$.

\begin{enumerate}[(1)]
	\item Fix a prime number $p$ between $2^{3\epsilon n}$ and $2\cdot 2^{3\epsilon n}$.
	\item Sample a random point $t \in K + B_2^n$.
	\item Using the number $p$, sparsify the lattice as in Theorem \ref{sparsify}. Denote the resulting lattice by $u+\L'$.
	\item Run the oracle for $\alpha$-approximate \cvp[2] for $\L'$ with target $t-u$, add $u$ to the vector returned and store it.
	\item Repeat steps (3) and (4) $n^2\cdot 2^{5\epsilon n}$ times. Among the resulting lattice vectors and their pairwise differences, output the shortest (non-zero) with respect to $\|\cdot\|_K$. 
\end{enumerate}

\begin{theorem}\label{thm:reduction_to_CVP_2}
	Let $\L \subseteq \R^d$ be any lattice of rank $n$. For any $\epsilon > 0$, there is a constant $\gamma_\epsilon$, such that there is a randomized, $2^{\epsilon n}$ time reduction from $(\alpha \cdot \gamma_{\epsilon})$-approximate \svp[K] for $\L$ to an oracle for $\alpha$-approximate \cvp[2] for $n$-dimensional lattices.
\end{theorem}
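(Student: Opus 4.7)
The plan is to execute the five-step procedure outlined in the paragraph preceding the theorem statement and to show that each full run succeeds with probability at least $2^{-O(\epsilon) n}$, then boost to high probability by repetition. After the stated preprocessing (rotating so $\text{span}(\L)=\R^n$, applying $T_\epsilon^{-1}$ to both $\L$ and $K$, and scaling so $1-1/n \leq \|\vs\|_K \leq 1$), we may assume $T_\epsilon=\text{Id}$ and that $K$ satisfies both conclusions of Theorem~\ref{covering:construction} with some constant $c_\epsilon$; in particular $\vs \in K$.

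I would first analyze one good run. The volume bound $\vol(K+B_2^n)\leq 2^{\epsilon n}\vol(B_2^n)$ from item~(\ref{covering:1}) ensures that a uniformly random $t \in K+B_2^n$ lands in $\vs + B_2^n$, equivalently $\vs \in t+B_2^n$, with probability at least $2^{-\epsilon n}$. Conditioning on this event, suppose that the oracle has already returned distinct lattice vectors $v_1,\dots,v_N \in t+\alpha B_2^n$, all different from $\vs$. Theorem~\ref{sparsify} with $w=\vs$ and prime $p\approx 2^{3\epsilon n}$ yields a shifted sub-lattice $u+\L'$ that contains $\vs$ but none of the $v_i$ with probability $1/p - N/p^2 - N/p^n = \Omega(2^{-3\epsilon n})$ as long as $N\leq n^2\cdot 2^{5\epsilon n}$. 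Querying the oracle on $\L'$ with target $t-u$ returns some $v'\in \L'$ with $\|(t-u)-v'\|_2 \leq \alpha\,\|(t-u)-(\vs-u)\|_2 = \alpha\|t-\vs\|_2 \leq \alpha$, so $u+v'\in t+\alpha B_2^n$ is a lattice vector, and because $v'\in \L'$ while $v_i\notin u+\L'$, it is genuinely new. Since each of the $n^2\cdot 2^{5\epsilon n}$ inner iterations independently adds a new vector with probability $\Omega(2^{-3\epsilon n})$, a Chernoff argument gives $2^{\epsilon n}+1$ distinct vectors in $t+\alpha B_2^n$ with high probability.

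The inconvenient hypothesis $v_i \notin \vs + p\L\setminus\{\bm{0}\}$ of Theorem~\ref{sparsify} is handled by Lemma~\ref{lemma:length}. Applied to the body $(1-1/n)\cdot K$, whose interior contains no non-zero lattice vector by minimality of $\vs$, the first conclusion gives $\|w\|_2 \geq (2^{-\epsilon n}/c_\epsilon)\cdot p$ for every non-zero $w\in p\L$, which with $p\approx 2^{3\epsilon n}$ is $\gg \alpha$. Hence every $\vs+w$ with $w\in p\L\setminus\{\bm{0}\}$ lies at Euclidean distance $\geq \|w\|_2 - \|t-\vs\|_2 \gg \alpha$ from $t$, and an $\alpha$-approximate \cvp[2] oracle with optimum value at most $1$ cannot possibly return such a vector. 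The $v_i$ therefore satisfy the hypothesis of Theorem~\ref{sparsify} automatically.

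Finally I would invoke the second covering property: by item~(\ref{covering:2}), $B_2^n$ is covered by $2^{\epsilon n}$ translates of $c_\epsilon\, K$, hence $t+\alpha B_2^n$ by $2^{\epsilon n}$ translates of $\alpha c_\epsilon\, K$. By pigeonhole, two of the $2^{\epsilon n}+1$ returned vectors $v_i,v_j$ share a translate, so $v_i-v_j$ is a non-zero lattice vector with $\|v_i-v_j\|_K \leq 2\alpha c_\epsilon$, yielding a $\frac{2\alpha c_\epsilon}{1-1/n}$-approximation to $\|\vs\|_K$. Setting $\gamma_\epsilon := 4c_\epsilon$ and repeating the outer procedure $\poly(n)\cdot 2^{\epsilon n}$ times to amplify the initial $2^{-\epsilon n}$ success probability into high probability completes the reduction, which has the advertised running time after reparametrising $\epsilon$. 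The point I would be most careful about is the quantitative choice of $p$: it must be large enough that (a) Lemma~\ref{lemma:length} pushes all of $\vs+p\L\setminus\{\bm{0}\}$ outside the oracle's reach, and (b) even after $N\approx n^2\cdot 2^{5\epsilon n}$ accumulated vectors, the sparsification failure term $N/p^2$ stays $o(1/p)$; both constraints drive the choice $p\approx 2^{3\epsilon n}$.
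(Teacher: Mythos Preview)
Your approach is the same as the paper's and is essentially correct, but there is one arithmetic slip in the quantitative analysis that you should fix. You claim that with $p\approx 2^{3\epsilon n}$ the sparsification bound $1/p - N/p^2 - N/p^n = \Omega(2^{-3\epsilon n})$ holds ``as long as $N\leq n^2\cdot 2^{5\epsilon n}$'', and you repeat in your final paragraph that ``even after $N\approx n^2\cdot 2^{5\epsilon n}$ accumulated vectors, the sparsification failure term $N/p^2$ stays $o(1/p)$''. This is false: with those parameters $N/p^2 \approx n^2\cdot 2^{-\epsilon n}$, which swamps $1/p\approx 2^{-3\epsilon n}$ and makes the lower bound vacuous.

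The repair (and this is exactly what the paper does) is to notice that the list fed to Theorem~\ref{sparsify} consists only of the \emph{distinct} lattice vectors already collected, and you may stop collecting as soon as you have $2^{\epsilon n}+1$ of them; hence in every application of the sparsification lemma you have $N\leq 2^{\epsilon n}$, so $N/p^2\leq 2^{-5\epsilon n}\ll 1/p$ and the success probability is indeed $\Omega(2^{-3\epsilon n})$. With this correction in place, the stochastic-domination/Chernoff step, the use of Lemma~\ref{lemma:length} to rule out $\vs+p\L\setminus\{\bm 0\}$, and the final pigeonhole via item~(\ref{covering:2}) all go through exactly as you describe and match the paper's argument.
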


\begin{proof}
	Let us already condition on the event that the sampled point $t$ verifies $\vs \in t + B_2^n$. This happens with probability at least $2^{-\epsilon n}$ by property (\ref{covering:1}) of Theorem \ref{covering:construction}. This probability can be boosted to $1-2^{-n}$ by repeating steps (2) to (5) $O(n\cdot2^{\epsilon n})$ times and outputting the shortest non-zero vector with respect to $\|\cdot\|_K$.\\
	We are now going to show that, with high probability, $\vs$ is going to be retained in the shifted sub-lattice and the lattice vector that is returned in that iteration is different from the ones returned from a previous iteration (where also $\vs$ was retained).
	To this end, denote by $L:=\{v_1, \cdots, v_j\}$ the (possibly empty) list of lattice vectors that were obtained in an iteration where $\vs$ belonged in the corresponding (shifted) sub-lattice. This implies that $\|t-v_i\|_2 \leq \alpha\cdot \|t - \vs\|_2$, and, by the triangle inequality and Lemma~\ref{lemma:length}, $\|v_i\|_2 \leq (2\cdot \alpha)\cdot 2^{\epsilon n}$ for all $i \in \{1, \cdots, j\}$. On the other hand, by slightly rescaling $K$ in Lemma~\ref{lemma:length}, the triangle inequality and our choice of $p$, any vector in $\vs + p\cdot\L\setminus\{\bm{0}\}$ is larger than $2^{2\epsilon n}/c_\epsilon$ in the Euclidean norm. It follows that
	\begin{displaymath}
		v_1, \cdots, v_j \notin \vs + p\cdot\L
	\end{displaymath}
	(We are assuming that $\alpha = 2^{o(n)}$, this is without loss of generality. For $\alpha = 2^{n \log\log(n)/\log(n)}$, Babai's algorithm runs in polynomial time.)\\
	Thus, provided $j \leq 2^{\epsilon n}$ and by Theorem \ref{sparsify}, in any iteration of the algorithm and with probability at least $2^{-3\epsilon n}/2$, we add a lattice vector to the list $L$ that is either distinct from all other vectors in $L$ or that equals $\vs$. In other words, the number of distinct lattice vectors in our list follows a binomial distribution with parameter $2^{-3\epsilon n}/2$. 
        Since we repeat this $n^2\cdot 2^{5\epsilon n}$ times, by Chernov's inequality and with probability at least $1-2^{-n}$, after the final iteration the list $L$ contains at least $2^{\epsilon n}+1$ distinct lattice vectors lying within $t + \alpha\cdot B_2^n$ or contains $\vs$. In the latter case we are done. In the former case, since $N(\alpha\cdot B_2^n, (\alpha\cdot c_\epsilon)\cdot K) \leq 2^{\epsilon n}$, by checking all $2^{10\epsilon n}$ pairwise differences, we will find a non-zero lattice vector with $\| \cdot \|_K$-norm at most $\alpha \cdot(2\cdot c_\epsilon)$. Since $\|\vs\|_K \geq 1-1/n$ and setting $\gamma_\epsilon := (1-1/n)^{-1}\cdot 2\cdot  c_\epsilon$, this vector is a $(\alpha\cdot\gamma_{\epsilon})$-approximation to $\vs$.
\end{proof}

\begin{remark}
	As described, this reduction requires $2^{\epsilon n}$ space. If we assume that the oracle for \cvp[2] is oblivious to previous inquiries, this can be improved to polynomial space. Indeed, in this case, step (5) need only be repeated twice. With  probability $2^{-O(\epsilon) n}$, both lattice vectors lie in the same (scaled) translate of $K$ or equal $\vs$. In our reduction, we can assume that the \cvp[2] oracle is malicious and returns lattice vectors that, dependent on previous inquiries, help us the least.
\end{remark}

\smallskip

The main geometric idea as outlined in the previous subsection leading to Theorem \ref{thm:reduction_to_CVP_2} can be generalized to any pair of norms. Specifically, given norms $\|\cdot\|_K, \|\cdot\|_Q$, we can reduce approximate \svp[K] to an oracle for approximate $\cvp[Q]$. Indeed, up to applying a linear transformation, it follows from Theorem \ref{covering:construction} that, 
\begin{displaymath}
	N(K, Q), N(Q, c_\epsilon \cdot K) \leq 2^{\Omega(\epsilon) n}.
\end{displaymath}
The same covering idea as depicted in Figure \ref{fig:2} (with $B_2^n$ replaced by $Q$) can then be made to work.
\begin{theorem}\label{thm:svp_K_to_cvp_Q}
	Let $\L \subseteq \R^d$ be any lattice of rank $n$ and let $\| \cdot \|_K,\| \cdot \|_Q$ be any norms on $\mathbb{R}^d$. For any $\epsilon > 0$, there is a constant $\gamma_\epsilon$, such that there is a randomized, $2^{\epsilon n}$ time reduction from $(\alpha \cdot \gamma_{\epsilon})$-approximate \svp[K] for $\L$ to an oracle for $\alpha$-approximate \cvp[Q] for $n$-dimensional lattices.
\end{theorem}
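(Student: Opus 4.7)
The plan is to mirror the proof of Theorem~\ref{thm:reduction_to_CVP_2} with the Euclidean ball $B_2^n$ replaced by $Q$ throughout. The sparsification step (Theorem~\ref{sparsify}), the length bound (Lemma~\ref{lemma:length}) and the final pigeonhole over translates of a scaled $T(K)$ all transfer line-by-line, since Lemma~\ref{lemma:length} uses only the two-sided covering relation between the two bodies and nothing specific to the Euclidean structure. The only new geometric ingredient required is the existence, after applying some linear transformation $T$ to the lattice and to $K$, of the two-sided covering
\[
N(T(K),\, c_\epsilon\cdot Q)\ \leq\ 2^{O(\epsilon) n} \qquad \text{and} \qquad N(Q,\, c_\epsilon\cdot T(K))\ \leq\ 2^{O(\epsilon) n}
\]
for a constant $c_\epsilon$ depending only on $\epsilon$.

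To construct such a $T$, I would apply Theorem~\ref{covering:construction} to \emph{both} $K$ and $Q$. This yields invertible linear maps $T_K, T_Q$ and a common constant $c_\epsilon$ (by taking the larger of the two constants supplied by the theorem) with $N(T_K(K), B_2^n)$, $N(B_2^n, c_\epsilon\cdot T_K(K))$, $N(T_Q(Q), B_2^n)$ and $N(B_2^n, c_\epsilon\cdot T_Q(Q))$ all bounded by $2^{\epsilon n}$. Setting $T := T_Q^{-1}\circ T_K$ aligns the two M-ellipsoids: writing $\mathcal{E} := T_Q^{-1}(B_2^n)$ and pulling the four inequalities through $T_Q^{-1}$ yields
\[
N(T(K), \mathcal{E}),\ N(\mathcal{E}, c_\epsilon\cdot T(K)),\ N(Q, \mathcal{E}),\ N(\mathcal{E}, c_\epsilon\cdot Q)\ \leq\ 2^{\epsilon n}.
\]
Chaining through the common body $\mathcal{E}$ via the submultiplicativity $N(A, C) \leq N(A, B)\cdot N(B, C)$ then delivers the two inequalities above with $2^{2\epsilon n}$ on the right and a blown-up constant of order $c_\epsilon^2$; replacing $\epsilon$ by $\epsilon/2$ in the initial call to Theorem~\ref{covering:construction} absorbs the doubled exponent.

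With these coverings in hand, the algorithm from Theorem~\ref{thm:reduction_to_CVP_2} runs verbatim after applying $T$ and scaling $\L$ so that $1-1/n \leq \|\vs\|_{T(K)} \leq 1$: one samples a random target $t \in T(K) + c_\epsilon Q$ so that $\vs \in t + c_\epsilon Q$ with probability $2^{-O(\epsilon)n}$; sparsifies using a prime $p$ chosen, via the $Q$-analogue of Lemma~\ref{lemma:length}, large enough that no vector of $\vs + p\cdot \L \setminus \{\bm{0}\}$ can be returned by the $\alpha$-CVP$_Q$ oracle; and repeats $n^2\cdot 2^{O(\epsilon)n}$ times. By Theorem~\ref{sparsify} and a Chernoff argument, the resulting list contains either $\vs$ itself or more than $2^{O(\epsilon)n}$ distinct lattice vectors inside $t + \alpha c_\epsilon Q$; two of these must share a translate of $\alpha c_\epsilon^2\cdot T(K)$, and their difference is the desired $(\alpha\cdot\gamma_\epsilon)$-approximate shortest vector with $\gamma_\epsilon := O(c_\epsilon^2)$. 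The main obstacle is precisely the chaining: $T(K)$ and $Q$ are a priori unrelated, which is why the two M-ellipsoids furnished by the two separate applications of Theorem~\ref{covering:construction} must first be aligned through the composed transformation $T_Q^{-1}\circ T_K$ before any covering can be extracted.
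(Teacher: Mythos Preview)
Your proposal is correct and matches the paper's approach essentially line for line: the paper too applies Theorem~\ref{covering:construction} to both $K$ and $Q$ and chains through the common ellipsoid (its Lemma~\ref{lem:covering_construction_K_Q}), which is equivalent to your composition $T=T_Q^{-1}\circ T_K$ up to a global linear change of variables. One small refinement worth noting: the random-target step needs the \emph{volume} estimate $\vol(T(K)+c_\epsilon Q)\le 2^{O(\epsilon)n}\vol(c_\epsilon Q)$ rather than just the covering bound, so the chaining should go through the volume half of Theorem~\ref{covering:construction} (via $N(A+C,B+C)\le N(A,B)$, as the paper does) rather than bare submultiplicativity of $N(\cdot,\cdot)$.
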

\noindent The proof largely follows from the ideas presented in this section and is deferred to the appendix, see subsection \ref{subsec:svp_K_to_cvp_q}.

\section{Approximate \cvp[] in time $2^{0.802 n}$}\label{sec:cvp_in_time_0.802}

In this section we are going to describe a $2^{0.802 n}$ time algorithm for a constant factor approximation to the closest vector problem in any norm. Specifically, for any $d$-dimensional lattice $\L$ of rank $n$, target $t$, any norm $\|\cdot\|_K$ and any $\epsilon > 0$, we are going to show how to approximate the closest vector problem to within a constant factor $\gamma_\epsilon$ in time $2^{(0.802+\epsilon) n }$. The space requirement is of order $2^{(0.401 + \epsilon)n}$.

To achieve this, we will adapt the geometric ideas as outlined in the previous subsection to the setting of the closest vector problem. To do so, we are first going to describe how to restrict to the full-dimensional case.


\begin{lemma}\label{cvp:d=n}
Consider an instance of the closest vector problem, $\cvp[K](\L, t)$, $\L\subseteq \R^d$ of rank $n$. In polynomial time, we can find a lattice $\tilde{\L} \subseteq \R^n$ of dimension $n$, target $\tilde{t}\in \R^n$ and norm $\|\cdot\|_{\tilde{K}}$ so that an $\alpha$-approximation to \cvp[\tilde{K}] on $\tilde{\L}$ with target $\tilde{t}$ can be efficiently transformed in a $(2\alpha+1)$-approximation to $\cvp[K](\L, t)$. Whenever $t \in \text{span}(\L)$, the latter is a $\alpha$-approximation to $\cvp[K](\L, t)$.
\end{lemma}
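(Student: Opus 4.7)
The plan is to identify $\mathrm{span}(\L)$ with $\R^n$ via a linear isomorphism, and to cope with the possibility $t\notin\mathrm{span}(\L)$ by first replacing $t$ with its $\|\cdot\|_K$-closest point $t'\in\mathrm{span}(\L)$ (taking $t'=t$ whenever $t\in\mathrm{span}(\L)$). The key observation is that any true closest lattice vector $v^*$ already lies in $\mathrm{span}(\L)$, so $\|t-t'\|_K\le\|t-v^*\|_K$; a single application of the triangle inequality then absorbs the cost of this substitution into the factor $2\alpha+1$.

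Concretely, I would pick the lattice basis $B\in\R^{d\times n}$ of $\L$ and consider the linear isomorphism $T:\R^n\to\mathrm{span}(\L)$, $x\mapsto Bx$. Setting $\wt{\L}:=T^{-1}(\L)=\Z^n$, $\wt{t}:=T^{-1}(t')$, and $\wt{K}:=T^{-1}(K\cap\mathrm{span}(\L))$, the body $\wt{K}$ is symmetric and convex with $\bm 0$ in its interior (inherited from $K\cap\mathrm{span}(\L)$), and by construction $\|T(x)\|_K=\|x\|_{\wt{K}}$ for every $x\in\R^n$. Hence any $\alpha$-approximate solution $\wt{v}$ to $\cvp[\wt{K}](\wt{\L},\wt{t})$ lifts to $v:=T(\wt{v})\in\L$ satisfying $\|t'-v\|_K\le\alpha\cdot\mathrm{dist}_K(t',\L)\le\alpha\cdot\|t'-v^*\|_K$. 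Combining with
\[
\|t-v\|_K\le\|t-t'\|_K+\|t'-v\|_K\le(\alpha+1)\|t-t'\|_K+\alpha\|t-v^*\|_K
\]
and the inequality $\|t-t'\|_K\le\|t-v^*\|_K$ yields the promised $(2\alpha+1)$-approximation; when $t\in\mathrm{span}(\L)$ the first term vanishes and we recover the factor $\alpha$ exactly.

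The only nontrivial ingredient will be the polynomial-time computation of $t'$, that is, the projection of $t$ onto $\mathrm{span}(\L)$ in the norm $\|\cdot\|_K$. This is a convex optimization problem — minimization of the convex function $\|t-\cdot\|_K$ over an affine subspace — and can be solved to arbitrary relative accuracy in polynomial time via the ellipsoid method using only a membership oracle for $K$. A near-optimal $t'$ with $\|t-t'\|_K\le(1+\varepsilon)\cdot\mathrm{dist}_K(t,\mathrm{span}(\L))$ is in fact sufficient: the extra $O(\varepsilon)$ loss it introduces in the final approximation ratio is readily absorbed into the constant $\gamma_\epsilon$ of the main theorem.
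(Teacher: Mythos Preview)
Your proof is correct and follows essentially the same approach as the paper: project $t$ to its $\|\cdot\|_K$-nearest point $t'$ in $\mathrm{span}(\L)$, bound the loss via the triangle inequality and $\|t-t'\|_K\le\|t-v^*\|_K$, and then pull back through a linear isomorphism between $\mathrm{span}(\L)$ and $\R^n$. The only cosmetic difference is that the paper uses a rotation sending $\mathrm{span}(\L)$ to $\R^n\times\{0\}^{d-n}$ as its isomorphism rather than the basis map $B$, and it likewise notes that $t'$ need only be computed approximately via convex optimization.
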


\begin{proof}
	Let us define $t' := \text{argmin}\{\|t-x\|_K, \, x \in \text{span}(\L)\}$. Such a point may not be unique, for instance for $K = B_1^n$ or $K= B_\infty^n$, but it suffices to consider any point $t'$ realizing this minimum or an approximation thereof. Given a (weak) separation oracle for $K$, this can be computed in polynomial time. We now show that an $\alpha$-approximation to $\cvp[K](\L, t')$ yields a $(2\cdot\alpha+1)$ approximation to $\cvp[K](\L, t)$. Indeed, since $\|t-t'\|_K$ is smaller than $\dist_K(t, \L)$, the distance of $t$ to the closest lattice vector, we have that
	\begin{displaymath}
		\dist_K(t', \L) \leq 2\cdot \dist_K(t, \L).
	\end{displaymath}	
	Denote by $c_\alpha \in \L$ an $\alpha$-approximation to the closest lattice vector to $t'$. By the triangle inequality,
	\begin{displaymath}
		\|t-c_\alpha\|_K \leq \|t-t'\|_K + \|t'-c_\alpha\|_K \leq \dist(t, \L) + \alpha \cdot \dist(t', \L) \leq (2\cdot\alpha +1)\cdot \dist(t, \L).
	\end{displaymath}
	This means that an $\alpha$-approximation to the closest vector to $t'$ is a $(2\cdot\alpha+1)$ approximation to the closest vector to $t$. Now that $t' \in \text{span}(\L)$, we can restrict to the case $d = n$. \\
	Let $\mathcal{O}_n$ be a linear transformation that first applies a rotation sending $\text{span}(\L)$ to $\R^n \times \{0\}^{d-n}$ and then restricts onto its first $n$ coordinates. The transformation $\mathcal{O}_n:\text{span}(\L) \rightarrow \R^n$ is invertible. The $n$-dimensional instance of the closest vector problem is then obtained by setting $\tilde{\L} \leftarrow \mathcal{O}_n(\L)$, $\tilde{t}\leftarrow \mathcal{O}_n(t')$ and $\|\cdot\|_{\tilde{K}}$ where $\tilde{K} \leftarrow \mathcal{O}_n(K)$. Whenever $c_\alpha$ is an $\alpha$-approximation to $\cvp[\tilde{K}](\tilde{\L},\tilde{t})$, 
        $\mathcal{O}_n^{-1}(c_\alpha)$ is a $(2\cdot\alpha+1)$-approximation to $\cvp[K](\L, t)$ (or an $\alpha$-approximation to $\cvp[K](\L, t)$, if $t \in \text{span}(\L)$).
\end{proof}

In our algorithm, we are going to make use of the main subroutine of~\cite{EisenbrandVenzin}. We note that their subroutine is implicit in all sieving algorithms for~\svp[2] and was first described by~\cite{svp_mic_voulg,sieving2PS}.
For convenience, we slightly restate it in the following form. 

\begin{theorem}\label{main_procedure}
	Given $\epsilon > 0$, $R > 0$, $N \in \N$ and a lattice $\L \subseteq \mathbb{R}^d$ of rank $n$, there is a randomized procedure that produces independent samples $v_1, \cdots, v_N \sim \mathcal{D}$, where the distribution $\mathcal{D}$ satisfies the following two properties:
	\begin{enumerate}
		\item Every sample $v \sim \mathcal{D}$ has $v \in \L$ and $\|v\|_2 \leq a_\epsilon \cdot R$, where $a_\epsilon$ is a constant only depending on $\epsilon$.\label{main_procedure:1} 
		\item\label{main_procedure:2} For any $\vs \in \L$ with $\|\vs\|_2 \leq R$, there are distributions $\mathcal{D}^{\vs}_0$ and $\mathcal{D}^{\vs}_1$ and some parameter $\rho_\vs$ with $2^{-\epsilon n} \leq \rho_\vs \leq 1$ such that the distribution $\mathcal{D}$ is equivalent to the following process:
			\begin{enumerate}
				\item \label{main_procedure:2a}With probability $\rho_\vs$, sample $u \sim \mathcal{D}^\vs_0$. Then, flip a fair coin and with probability $1/2$, return $u$, otherwise return $u + \vs$.
				\item With probability $1-\rho_\vs$, sample $u \sim \mathcal{D}^\vs_1$.
			\end{enumerate}
	\end{enumerate}
This procedure takes time $2^{(0.802 + \epsilon)n} + N \cdot 2^{(0.401 + \epsilon)n}$ and requires $N + 2^{(0.401+\epsilon)n}$ space and succeeds with probability at least $1/2$.
\end{theorem}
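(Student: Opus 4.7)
This theorem is a restatement of the main sampling subroutine developed in \cite{EisenbrandVenzin}, which in turn packages the structural invariant behind the sieving algorithms of \cite{svp_mic_voulg, sieving2PS, AggarwalUrsuVaudenay2019}. The plan is to construct the distribution $\mathcal{D}$ as the output of a hash-based Gaussian sieve on $\L$ and then verify the claimed probabilistic decomposition of the output.

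First I would run the sieve on $\L$ at target radius $R$: starting from a large pool of discrete-Gaussian lattice samples, iteratively reduce nearby pairs $(a,b) \mapsto a-b$ whenever this shortens $a$. Near-neighbor search inside the sieve is implemented via locality-sensitive hashing; the Kabatianski-Levenshtein bound on nearly orthogonal short vectors caps the relevant kissing-number constants and yields running time $2^{(0.802+\epsilon)n}$ with a working list of size $2^{(0.401+\epsilon)n}$, all vectors of Euclidean norm at most $a_\epsilon R$. Each of the $N$ output samples $v_i$ is then drawn independently as a pair difference of fresh Gaussian samples pushed through the hash buckets, at cost $2^{(0.401+\epsilon)n}$ per sample. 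This gives property (\ref{main_procedure:1}) together with the stated resource bounds.

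The crucial property (\ref{main_procedure:2}) exploits the symmetry of the reduction rules under the translation $a \mapsto a-\vs$ for any $\vs\in\L$ with $\|\vs\|_2\leq R$. Whenever both $a$ and $a-\vs$ satisfy the sieve's length bound, the pair $(a,b)$ and its shifted counterpart $(a-\vs,b)$ produce outputs differing exactly by $\vs$. I would couple the algorithm's randomness with an auxiliary fair coin deciding which of $\{a,a-\vs\}$ plays the role of the reducer; a bucket-counting argument (at most $2^{\epsilon n}$ hash buckets, so an at least $2^{-\epsilon n}$-fraction admits the shifted pair) shows that both are admissible with probability $\rho_\vs \geq 2^{-\epsilon n}$, and conditional on this event the returned sample is uniformly distributed over $\{u,u+\vs\}$ for the appropriate $u \sim \mathcal{D}_0^\vs$. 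The remaining probability mass is absorbed into an arbitrary distribution $\mathcal{D}_1^\vs$.

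The main obstacle is that this coupling must commute with \emph{every} stage of the pipeline---initial Gaussian sampling, iterative pair reduction, and final pair output---independently of all prior randomness, and the extracted parameter $\rho_\vs$ must depend only on the bound $\|\vs\|_2\leq R$ rather than on the particular position of $\vs$ inside $\L$. Establishing this uniform-in-$\vs$ symmetry, and ensuring that the fair coin of (\ref{main_procedure:2a}) is genuinely independent of the sieve's internal coins, is the technical heart of \cite{EisenbrandVenzin, svp_mic_voulg, sieving2PS}; here we invoke their analysis as a black box to obtain the three stated properties.
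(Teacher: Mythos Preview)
The paper does not give its own proof of this theorem: it is stated as a convenient reformulation of the main subroutine from \cite{EisenbrandVenzin} (itself implicit in the sieving algorithms of \cite{svp_mic_voulg,sieving2PS}), and is used as a black box. Your proposal likewise ultimately invokes those works as a black box after sketching the sieving-plus-coupling construction, so you are taking essentially the same approach as the paper, just with more expository detail than the paper provides.
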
 

With this randomized procedure, we obtain our main result.
\begin{theorem}\label{cvp_to_sieving_2}
	For any $\epsilon > 0$ and lattice $\L$ of rank $n$, we can solve the $\gamma_\epsilon$-approximate closest vector problem for any norm $\| \cdot \|_K$ in (randomized) time $2^{(0.802+\epsilon) n}$ and space $2^{(0.401 + \epsilon)n}$.
\end{theorem}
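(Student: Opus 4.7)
The strategy is to mirror the SVP reduction of Theorem \ref{thm:reduction_to_CVP_2}, but to open up the black box of the $\ell_2$-CVP oracle and exploit the fine-grained distributional guarantee in property (\ref{main_procedure:2}) of the sieving primitive from Theorem \ref{main_procedure}. First reduce to the full-dimensional setting via Lemma \ref{cvp:d=n}, at a cost of a factor $2$ in the approximation. Then apply the transformation $T_\epsilon^{-1}$ of Theorem \ref{covering:construction} to $\L$, $t$, and $K$ simultaneously, so that we may assume $K \subseteq \bigcup_{i \leq N}(x_i + B_2^n)$ and $B_2^n \subseteq \bigcup_{j \leq N}(y_j + c_\epsilon K)$ with $N \leq 2^{\epsilon n}$. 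A $\poly(n)$-sized grid search on $R := \dist_K(t,\L)$ lets us assume $R \in [1-1/n,1]$, so the true closest vector $c^*$ satisfies $c^* \in t - x_{i^*} + B_2^n$ for some unknown index $i^*$.

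The algorithm then iterates over the $2^{\epsilon n}$ covering centers $x_i$. For each center, invoke the sieving subroutine of Theorem \ref{main_procedure} in its coset variant with shifted target $\tilde t_i := t - x_i$, producing $N_0 := 2^{(0.401+\epsilon)n}$ samples $v \in \L$ with $\|v - \tilde t_i\|_2 \leq a_\epsilon$. For the correct $i^*$ this $\ell_2$-ball of radius $a_\epsilon$ around $\tilde t_{i^*}$ contains $c^*$ and, by property (\ref{covering:2}) of Theorem \ref{covering:construction} rescaled by $a_\epsilon$, is covered by at most $2^{\epsilon n}$ translates of $(a_\epsilon c_\epsilon)\cdot K$. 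Hence any sample $v$ landing in the same $K$-translate as $c^*$ satisfies $\|v - c^*\|_K \leq 2 a_\epsilon c_\epsilon$, and combined with $\|c^* - t\|_K \leq 1$ and the triangle inequality gives a $\gamma_\epsilon$-approximation for $\gamma_\epsilon = O(a_\epsilon c_\epsilon)$.

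To guarantee such a collision, I would use the biased-distribution guarantee of property (\ref{main_procedure:2}) in tandem with the lattice sparsification of Theorem \ref{sparsify}, exactly as in Theorem \ref{thm:reduction_to_CVP_2}. Across repeated rounds, sparsify $\L$ so that $c^*$ survives with probability $\approx 1/p$ while previously returned samples are killed, forcing fresh calls to the sieving subroutine to output lattice vectors distinct from those already seen; for the correct $x_{i^*}$, property (\ref{main_procedure:2}) keeps the witness $\vs^* := c^* - v_0$ (with $v_0$ a fixed lattice vector close to $\tilde t_{i^*}$ obtained from, say, Babai's algorithm) represented in the sampled distribution with probability $\geq 2^{-\epsilon n}$. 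After $\poly(n)\cdot 2^{O(\epsilon)n}$ rounds the list contains, with high probability, either $c^*$ itself or more than $2^{\epsilon n}$ distinct samples in $\tilde t_{i^*} + a_\epsilon B_2^n$; pigeonholing over the $\leq 2^{\epsilon n}$ covering $K$-translates then produces the desired vector. The outer loop over $x_i$ contributes a factor $2^{\epsilon n}$, each sieving call costs $2^{(0.802+\epsilon)n}$ time and $2^{(0.401+\epsilon)n}$ space, yielding the claimed resource bounds after absorbing $\epsilon$-factors.

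The main obstacle is the interaction between sparsification and the fine-grained sieving property: one must show that property (\ref{main_procedure:2}), stated for a given lattice and witness $\vs$, survives passage to a random sparsified sub-lattice with the crucial parameter $\rho_{\vs^*} \geq 2^{-\epsilon n}$ essentially intact, where the witness corresponds to the residual closest-vector direction in the sparsified coset. A secondary technicality is instantiating the coset/shifted variant of Theorem \ref{main_procedure} so that the samples concentrate around $\tilde t_{i^*}$ rather than around the origin while retaining the same time, space, and distributional guarantees; this is precisely the adaptation first carried out for $\ell_p$-norms in~\cite{EisenbrandVenzin} that here must be combined with the general covering provided by Theorem \ref{covering:construction}.
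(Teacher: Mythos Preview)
Your proposal has a genuine gap: you have inverted the difficulty. What you call a ``secondary technicality'' --- getting the sieving subroutine to output lattice vectors concentrated near the target $\tilde t_{i^*}$ rather than near the origin --- is in fact the entire problem, and your proposed mechanism for it does not work. Theorem~\ref{main_procedure} as stated produces \emph{short} vectors $\|v\|_2 \leq a_\epsilon R$; there is no coset variant available a priori. Your suggestion to manufacture a witness $\vs^* := c^* - v_0$ from a Babai point $v_0$ fails because Babai only guarantees $\|v_0 - \tilde t_{i^*}\|_2 \leq 2^{O(n)}\cdot \dist_2(\tilde t_{i^*},\L)$, so $\|\vs^*\|_2$ can be exponentially large and property~(\ref{main_procedure:2}) (which requires $\|\vs\|_2 \leq R$) gives nothing. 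Moreover, grafting sparsification onto the sieve is the wrong tool here: in Theorem~\ref{thm:reduction_to_CVP_2} sparsification forces a \emph{deterministic} $\alpha$-\cvp[2] oracle to diversify its answers, but the sieving procedure is already a randomized sampler, and re-sieving on each sparsified sublattice does not interact with property~(\ref{main_procedure:2}) in the way your pigeonhole argument needs.

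The paper's proof takes a cleaner and genuinely different route that you are missing: it uses \emph{Kannan's embedding}. After sampling $\tilde t$ with $\vc \in \tilde t + B_2^n$, it builds the rank-$(n{+}1)$ lattice $\L'$ with basis $\bigl(\begin{smallmatrix} B & \tilde t \\ 0 & 1/n \end{smallmatrix}\bigr)$, so that $\vs := (\tilde t - \vc,\,1/n)$ is now a \emph{short} vector in $\L'$ with $\|\vs\|_2 \leq 1+1/n$. Property~(\ref{main_procedure:2}) then applies directly to $\vs$, and the paper draws only $N=2$ samples: with probability $2^{-O(\varepsilon)n}$ both are generated via branch~(\ref{main_procedure:2a}), both underlying $u_1,u_2$ fall in the same translate of $(a_\varepsilon c_\varepsilon)\cdot K \times \{0\}$, and the independent coin flips give $\sigma_1=1,\sigma_2=0$. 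Then $(v_1+\vs)-v_2$ reveals $\vc$ up to a $K$-short correction. No sparsification, no list of $2^{\varepsilon n}$ samples, no pigeonhole --- the coin-flip structure of~(\ref{main_procedure:2a}) does all the work once the embedding has converted ``close to $\tilde t$'' into ``short in $\L'$''.
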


\begin{proof}
	Using the reduction given by Lemma \ref{cvp:d=n}, we may assume that $\L$ is of full rank, i.e. $\L \subseteq~\R^n$.
	We denote by $\vc$ the closest lattice vector with respect to $\|\cdot\|_K$ to $t$. Up to scaling and applying the inverse of the linear transformation given by Theorem \ref{covering:construction} to $\L$, $K$ and $t$, we may assume that $1-1/n \leq \|\vc-t\|_K \leq 1$, $\text{Vol}(K+B_2^n) \leq 2^{\epsilon n}\cdot\text{Vol}(B_2^n)$ and $N(B_2^n, c_\epsilon\cdot K) \leq 2^{\epsilon n}$ for the constant $c_{\epsilon} >0$.\\
	We now sample a uniformly random point $\tilde{t}$ within $t+K + B_2^n$. With probability at least $2^{-\epsilon n}$, $\vc \in \tilde{t}+B_2^n$. For the remainder of the proof, we condition on $\vc \in \tilde{t}+B_2^n$.  \\
	\begin{figure}[H]\label{CVP-to-sievingA}
	\begin{center}
		\includegraphics[width=5.5cm]{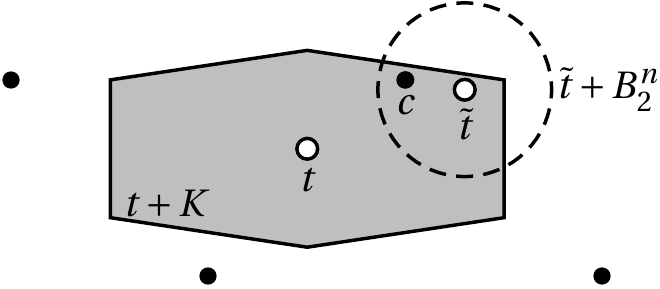}
	\end{center}
	\caption{Finding a new target vector $\tilde{t}$.} 
\end{figure}
	We now use Kannan's embedding technique~\cite{DBLP:journals/mor/Kannan87} and define a new lattice $\L'\subseteq \R^{n+1}$ of rank $n+1$ with the following basis:
	\begin{displaymath}
		\tilde{B} = \begin{pmatrix}
			B & \tilde{t}\\
			0 & 1/n\\
		\end{pmatrix}\in \Q^{(n+1)\times(n+1)}.
              \end{displaymath}
	\begin{figure}[H]\label{CVP-to-sievingB}
	\begin{center}
		\includegraphics[width=11.0cm]{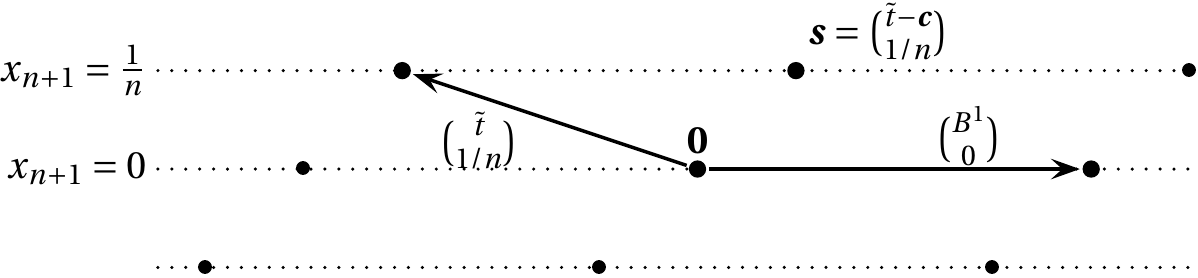}
	\end{center}
	\caption{New lattice $\L'$.} 
\end{figure}
	Finding a ($\alpha$-approximate) closest lattice vector to $\tilde{t}$ in $\L$ is equivalent to finding a ($\alpha$-approximate) shortest lattice vector in $\L' \cap \{x \in \R^{n+1} \mid x_{n+1} = 1/n\}$. The vector $\vs := (\tilde{t}-\vc, 1/n)$ is such a vector (although not necessarily shortest), its Euclidean length is at most $1 + 1/n$. \\
	Now, consider the $n+1$-dimensional scaled Euclidean ball $ ((1+1/n) \cdot a_\epsilon) \cdot B_2^{n+1}$. Here, $a_\epsilon$ is the constant from Theorem \ref{main_procedure}. Each of its $n$-dimensional layers of the form $((1+1/n)\cdot a_\epsilon) \cdot B_2^{n+1} \cap\{x\in \R^{n+1}\mid x_{n+1}=k/n\}$ for $k \in \Z$ can be covered by at most $2^{\epsilon n}$ translates of $((1+1/n)\cdot a_{\epsilon} \cdot c_\epsilon) \cdot K\times\{0\}$. It follows that all lattice vectors inside $a_\epsilon \cdot (1+1/n) \cdot B_2^{n+1} \cap \L'$ can be covered by at most $(2\cdot n\cdot a_\epsilon+1)\cdot 2^{\epsilon n}$ translates of $((1+1/n)\cdot a_{\epsilon} \cdot c_\epsilon) \cdot K\times\{0\}$.\\ 
	We now use the procedure from Theorem~\ref{main_procedure} with $R := 1+1/n$ and sample $N:=2$ lattice vectors from $a_{\varepsilon} \cdot R \cdot B_2^{n+1} \cap \L'$.
        With probability at least $\frac{1}{2} \cdot \rho_{\vs}^2 \geq \frac{1}{2} \cdot 2^{-2\varepsilon n}$ this succeeds and both lattice vectors are generated according to (\ref{main_procedure:2a}). Let us condition on this event.
        That means the sampled lattice vectors are of the form $v_1+\sigma_1\bm{s}$ and $v_2+\sigma_2\bm{s}$
        where $v_1,v_2,\sigma_1,\sigma_2$ are independently distributed with $v_1,v_2 \sim \mathcal{D}_0^{\bm{s}}$ and $\sigma_{1},\sigma_2 \sim \{ 0,1\}$ uniformly. 
        Since $v_1$ and $v_2$ are i.i.d., with probability at least $(2\cdot n\cdot a_\epsilon+1)^{-2}\cdot2^{-2\epsilon n}$, there must be one translate
        of $((1+1/n)\cdot a_{\epsilon} \cdot c_\epsilon)\cdot K \times \{ 0\}$ that contains both $v_1$ and $v_2$. Put differently and using $K = -K$, 
	\begin{displaymath}
		v_1 - v_2 \in (2\cdot (1+1/n)\cdot a_{\epsilon} \cdot c_\epsilon)\cdot K\times\{0\}.
	\end{displaymath}
        Next, we decide the independent coin flips and with probability of 1/4 we have $\sigma_1 = 1$ and $\sigma_2 = 0$.
 We condition on this event. Then the difference of the sampled lattice vectors is
	\begin{displaymath}
		(v_1+\bm{s})-v_2 \in (2\cdot(1+1/n)\cdot a_{\epsilon} \cdot c_\epsilon)\cdot K\times\{0\}+\vs.
	\end{displaymath}
	 We can rewrite it as
	\begin{displaymath}
		(v_1+\bm{s})-v_2 = \begin{pmatrix} u \\ 0 \end{pmatrix} + \vs = \begin{pmatrix}
			u-\vc\\
			0
		\end{pmatrix}+
	\begin{pmatrix}
		\tilde{t}\\
		1/n	
	\end{pmatrix},
	\end{displaymath} 
	where $u \in \L$ and $\|u\|_K \leq (2\cdot(1+1/n)\cdot a_{\epsilon} \cdot c_\epsilon)$. The vector $u-\vc$ can be found by adding $-\tilde{t}$ to the first $n$ coordinates of $(v_1+\bm{s})-v_2$. Then $\vc-u$ will be our approximation to the closest lattice vector. Indeed, by the triangle inequality,
	\begin{displaymath}
		\|t-(\vc-u)\|_K \leq \|t-\vc\|_K + \|u\|_K \leq 1 + (2\cdot(1+1/n)\cdot c_{\epsilon} \cdot a_\epsilon) := \beta_\epsilon.
	\end{displaymath}	
	We set $\gamma_{\epsilon} := (1-1/n)^{-1}\cdot(2\cdot \beta_\epsilon + 1)$; recall that we have used Lemma~\ref{cvp:d=n} and scaled the lattice so that $\|t-\vc\|_K \geq 1-1/n$. 
    The lattice vector $\vc-u$ is a $\gamma_{\epsilon}$-approximation to the closest lattice vector to $t$.\\
	To boost the probability of success from $2^{-\Omega(\epsilon)n}$ to $1-2^{-n}$, we can repeat the steps starting from where we defined $\L'$ $2^{-\Omega(\epsilon)n}$ times and only store the currently closest lattice vector to $t$. Finally, to ensure that a $\tilde{t}$ with $\vc \in \tilde{t} + B_2^n$ is found, we repeat the whole procedure starting from (2) $O(n \cdot 2^{\epsilon n})$ many times. This boosts the overall success probability to $1-2^{-n}$ and yields a total running time of $2^{(0.802 + O(\epsilon))n}$ and space $2^{(0.401+O(\epsilon))n}$. 
\end{proof}

It is unclear whether these ideas can be strengthened to yield an approximation to \cvp[K] using an oracle for approximate \cvp[2] (or even approximate \cvp[Q] for some norm $\|\cdot\|_Q$). Indeed, we crucially rely on the distribution from Theorem \ref{main_procedure} which already solves (and might be more powerful than) approximate \cvp[2]. This procedure is stated in terms of the Euclidean norm, but it does not exploit anything specific about the Euclidean norm other than a bound of $2^{0.401 n}$ on a variant of the kissing number. In particular, this procedure can be made to work for any norm $\|\cdot\|_Q$. In fact, this procedure is inherent in any sieving algorithm for \svp[] and its running time and space requirement only depend on this variant of the kissing number corresponding to $\|\cdot\|_Q$. Specifically, it is the maximum number $N$ of points $v_1, \cdots, v_N\in \bd{Q}$ so that $\|v_i-v_j\|_Q \geq 1-\gamma$ for some $\gamma > 0$. For the case of the Euclidean norm, this can be rephrased in terms of the angular distance. It is then the maximum number $N$ of points $v_1, \cdots, v_N\in \mathbb{S}^{n-1}$ so that $\cos(v_i, v_j) \geq 60^\circ - \tilde{\gamma}$ for some $\tilde{\gamma} > 0$. The approximation guarantee, e.g. $a_\epsilon$ in (\ref{main_procedure:1}) of Theorem \ref{main_procedure}, then also depends on $\gamma$ and $\tilde{\gamma}$ respectively.\\
So, up to some small tweaks to an algorithm for approximate \svp[Q] and with essentially the same running time and space, we obtain the procedure from Theorem \ref{main_procedure} but for $\|\cdot\|_Q$. Using similar covering ideas as for Theorem \ref{thm:svp_K_to_cvp_Q} as outlined in Section \ref{sec:coverings}, we obtain the desired reduction.

\begin{theorem*}[informal]
	For any two norms $\|\cdot\|_K, \|\cdot\|_Q$ and $\epsilon > 0$, we can solve $\gamma_\epsilon$-approximate \cvp[K] using $2^{\epsilon n}$ calls to a sieving algorithm for \svp[Q].
\end{theorem*}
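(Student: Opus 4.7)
My plan is to mirror the proof of Theorem \ref{cvp_to_sieving_2}, replacing the Euclidean ball $B_2^n$ by the body $Q$ throughout. First apply Lemma \ref{cvp:d=n} to reduce to the full-rank case $\L \subseteq \R^n$. The key preliminary step is to produce a \emph{single} invertible linear map $T$ under which $T(K)$ and $Q$ enjoy good \emph{mutual} covering. I would obtain this by applying Theorem \ref{covering:construction} separately to $K$ and to $Q$, yielding maps $T_K, T_Q$ with $T_K(K)$ and $T_Q(Q)$ each in good covering relation with $B_2^n$ (constants $c_{\epsilon,K}, c_{\epsilon,Q}$), and then setting $T := T_Q^{-1} T_K$. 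Using invariance of translative covering numbers under invertible linear maps together with the multiplicativity $N(A,C) \leq N(A,B)\cdot N(B,C)$, this composition yields
\begin{displaymath}
	N\!\left(T(K),\, c_{\epsilon,Q} \cdot Q\right) \leq 2^{2\epsilon n}, \qquad N\!\left(Q,\, c_{\epsilon,K} \cdot T(K)\right) \leq 2^{2\epsilon n}.
\end{displaymath}
Passing to the equivalent problem $\cvp[T(K)]$ on $T(\L)$ with target $T(t)$, one may henceforth assume that $K$ (as we continue to denote the transformed body) and $Q$ themselves have good mutual covering, up to a constant rescaling of $\epsilon$.

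In the main step I would follow the proof of Theorem \ref{cvp_to_sieving_2} essentially verbatim, replacing $B_2^n$ by $Q$ and $B_2^{n+1}$ by a natural $(n{+}1)$-dimensional extension $\hat{Q}$, for instance $\hat{Q} := Q \times [-1,1]$, whose horizontal slices at any height in $[-1,1]$ equal $Q$. After rescaling so that $1 - 1/n \leq \|\vc - t\|_K \leq 1$, sample $\tilde{t}$ uniformly from $t + K + Q$: with probability $\geq 2^{-O(\epsilon)n}$ this satisfies $\vc \in \tilde{t} + Q$. Form the Kannan embedding $\L' \subseteq \R^{n+1}$ so that $\vs := (\tilde{t} - \vc,\, 1/n)$ is the short target lattice vector, with $\|\vs\|_{\hat{Q}} \leq 1 + 1/n =: R$. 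The paragraph preceding the theorem asserts that a sieving algorithm for $\svp[\hat{Q}]$ yields the $\hat{Q}$-analog of Theorem \ref{main_procedure}, so we may draw $N = 2$ samples in $a_\epsilon R \cdot \hat{Q}$, each of which decomposes as $v + \sigma \vs$ with probability $\rho_{\vs} \geq 2^{-\epsilon n}$. Conditioning on both samples admitting this decomposition (probability $\rho_{\vs}^2 \geq 2^{-2\epsilon n}$) and projecting to the first $n$ coordinates, each $v_i$ lies in $a_\epsilon R \cdot Q$, which by mutual covering is covered by $2^{O(\epsilon) n}$ translates of a constant rescaling of $K \times \{0\}$ (together with the $O(n)$ possible values of the last coordinate). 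Since $v_1, v_2$ are i.i.d., a pigeonhole argument places them in a common translate with probability $\geq 2^{-O(\epsilon)n}$. The independent coin flips then give $\sigma_1 = 1, \sigma_2 = 0$ with probability $1/4$, so $(v_1 + \vs) - v_2$ lies in $O(1) \cdot K \times \{0\} + \vs$, and exactly as in Theorem \ref{cvp_to_sieving_2} one extracts a lattice vector $\vc - u$ with $\|t - (\vc - u)\|_K \leq O(1)$, i.e.\ a $\gamma_\epsilon$-approximation. Boost the success probability to $1 - 2^{-n}$ by $2^{O(\epsilon) n}$ repetitions, and transport the solution back via $T^{-1}$.

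The main obstacle lies in the preliminary step: the composition $T = T_Q^{-1} T_K$ furnishes covering numbers directly, but to justify that $\tilde t$ covers $\vc$ with probability $2^{-O(\epsilon) n}$ one also needs the volume estimate $\vol(T(K) + Q) \leq 2^{O(\epsilon) n}\cdot \vol(Q)$, which is the $Q$-analog of the stronger form in part (\ref{covering:1}) of Theorem \ref{covering:construction}. Either a more careful simultaneous $M$-position argument can be run to secure this, or one can bypass it entirely by replacing the random sampling of $\tilde t$ with a deterministic enumeration over the $2^{O(\epsilon) n}$ centers of the covering of $K$ by translates of $Q$; this alternative still fits within the $2^{\epsilon n}$ oracle-call budget. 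A secondary point is verifying that any sieve for $\svp[Q]$ transfers to a sieve for $\svp[\hat{Q}]$ in dimension $n+1$, but since sieving procedures depend on the norm only through a kissing-number-type parameter of a symmetric convex body, this follows with essentially no loss.
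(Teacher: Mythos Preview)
Your proposal is correct and follows essentially the same route as the paper's Theorem~\ref{cvp_to_sieving_Q}: reduce to full rank via Lemma~\ref{cvp:d=n}, put $K$ and $Q$ in mutually good covering position, apply Kannan's embedding with the cylinder $Q^{+1} := Q \times [-1,1]$, and invoke the $Q$-norm sieving procedure (the paper's Theorem~\ref{main_procedure_Q}, with Lemma~\ref{norm_plus_1} handling the passage from $Q$ to $Q^{+1}$). Your flagged obstacle dissolves: the paper obtains the needed volume estimate in one line (Lemma~\ref{lem:covering_construction_K_Q}) via $\vol(\tilde K + c_\epsilon \tilde Q) \leq N(\tilde K, B_2^n)\cdot \vol(B_2^n + c_\epsilon \tilde Q) \leq 2^{2\epsilon n}\vol(c_\epsilon \tilde Q)$, using only the covering number and volume bound already supplied by Theorem~\ref{covering:construction}, so neither of your workarounds is required.
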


The exact statement and its proof is deferred to the appendix, see Theorem~\ref{cvp_to_sieving_Q}. In Subsection~\ref{subsec:cvp_K_to_sieving_Q}, we also discuss in greater detail what we mean by a sieving algorithm, its relation to this variant of a kissing number and how to adapt the procedure from Theorem \ref{main_procedure} to arbitrary norms.

\section{Constructing the ellipsoid}\label{sec:ellipsoid}

We now discuss Theorem \ref{covering:construction}. Specifically, for a given symmetric convex body $K \subseteq \R^n$ and any fixed $\epsilon > 0$, we are going to outline the construction of a linear transformation $T_\epsilon$ such that
\begin{enumerate}
	\item $\text{Vol}(T_\epsilon(K) + B_2^n) \leq 2^{\epsilon n}\cdot \text{Vol}(B_2^n)$ and 
	\item $\vol(B_2^n + c_\epsilon\cdot T_{\varepsilon}(K)) \leq 2^{\epsilon n}\cdot \vol(c_\epsilon\cdot T_{\varepsilon}(K))$.
\end{enumerate}
The number $c_\epsilon$ only depends on $\epsilon$ and equals $ C^2\cdot \epsilon^{-4}$. Here, $C$ is an universal constant.\\
This construction is based on a procedure called \emph{isomorphic symmetrization}. It is taken almost verbatim from the proof of existence of $M$-ellipsoids due to Milman~\cite{Milman1988IsomorphicSymmetrization}, see also the wonderful textbook of \cite{ArtsteinAvidan2015AsymptoticGA}. In particular, this technique has been made algorithmic by Dadush and Vempala~\cite{DadushVempala13_Mellipsoids} to obtain (deterministic) algorithms for volume computation. Our contribution is thus largely the (simple) observation that this procedure can be stopped earlier to yield the desired properties. For this reason, we prefer to keep this part rather informal and only sketch the relevant techniques\footnote{For the sake of completeness we should point out that the mere existence of such a linear transformation has been known. Pisier~\cite{Pisier1989NewApproachForMilman} proved the existence of a \emph{regular} $M$-ellipsoid. More precisely, he proved that for any constant $0 \leq p < 2$ and any symmetric convex body $K \subseteq \mathbb{R}^n$, there exists an ellipsoid $\mathcal{E} \subseteq \mathbb{R}^n$ so that
	$N(K,t\mathcal{E}), N(\mathcal{E},tK),N(K^{\circ},t\mathcal{E}^{\circ}),N(\mathcal{E}^{\circ},tK^{\circ}) \leq \exp\big(A_pn/t^p\big)$ for all $t \geq 1$. The implication is that this ellipsoid $\mathcal{E}$ approximates $K$ for the \emph{whole} range of $t$ while our
	modification of the isomorphic symmetrization gives an ellipsoid $\mathcal{E}$ that works for a \emph{particular} value of $t$. However, it is less clear how to make Pisier's complex-analytic argument constructive.}.
\medskip

We first introduce the \emph{polar} (or \emph{dual}) $K^\circ$ of $K$. Given $K \subseteq \R^n$, we define
\begin{alignat*}{1}
	K^\circ := \big\{x \in \R^n \mid x^T y \leq 1, \, \forall y \in K\big\}.
\end{alignat*}
Whenever $K = -K \subseteq \R^n$ is full dimensional and with $\bm{0}$ in its interior, so is $K^\circ$. Note that applying a linear invertible transformation $A$ to $K$ transforms $K^\circ$ by $A^{-1}$, i.e. $(A\cdot K)^\circ = A^{-1}\cdot K^\circ$. We can now define the \emph{M-values} $M(K)$ and $M(K^\circ)$ of $K$ and $K^\circ$ respectively. 
\begin{displaymath}
	M(K) = \E_{x \sim \mathbb{S}^{n-1}}[\|x\|_K] \,\,\text{      and      }\,\, M(K^\circ)= \E_{x \sim \mathbb{S}^{n-1}}[\|x\|_{K^\circ}].
\end{displaymath}
Here $\mathbb{S}^{n-1} := \{ x \in \mathbb{R}^n \mid \|x\|_2 = 1\}$ is the sphere.
These quantities can be estimated to arbitrary precision using samples from $K$ and $K^\circ$ respectively.\\
We can now state the celebrated \emph{$M\, M^\circ$ estimate} which follows from combining results of Pisier~\cite{Pisier1980PisiersInequalityOnKConvexity} and of Figiel and Tomczak. For any symmetric convex $K$, there is a linear transformation $T$ and a universal constant $C$ such that
\begin{alignat}{1}
	M(T(K))\cdot M((T(K))^\circ) \leq C\cdot\log(d(K, B_2^n)).\label{def:MMestimate}
\end{alignat}
Here, $d(K, B_2^n)$ is the \emph{Banach-Mazur distance} of $K$ to the Euclidean ball. Specifically, it is the smallest number $s$ such that $B_2^n \subseteq A\cdot K \subseteq s\cdot B_2^n$ for some affine map $A:\R^n \rightarrow \R^n$. It is known that for any symmetric convex body $K \subseteq \mathbb{R}^n$, one always has  $d(K,B_2^n) \leq \sqrt{n}$~\cite{John1948}. \\
The linear transformation $T:\R^n \rightarrow \R^n$ in \eqref{def:MMestimate} can be calculated in (randomized) polynomial time to within arbitrary precision by solving the following \emph{convex} program:
\begin{gather*}
	\max \det (T)\\
	\E_{x\in \gamma_n}\big[\|T(x)\|_K^2\big]^{1/2} \leq 1\\
	T \in \R^{n\times n} \text{ positive-definite}.
\end{gather*}
Here $\gamma_n$ denotes the \emph{Gaussian distribution} on $\R^n$ with density function given by $\frac{1}{(2\pi)^{n/2}}\cdot e^{-\|x\|_2^2/2}$.
In order to justify that the above is indeed a convex program, note that the map $T \mapsto \E_{x \in \gamma_n}[\|T(x)\|_K^2]^{1/2}$ is a \emph{(matrix) norm} and the map $\log \det(T)$ is concave on the cone of positive-definite matrices. 

We can now define an iterative procedure. To initialize it, we set $K_0 \leftarrow K$ and find the linear transformation $T_0$ (guaranteed by the $M\,M^\circ$ estimate) such that
\begin{displaymath}
	M(T_0(K_0))\cdot M((T_0(K_0))^\circ) \leq C\cdot \log(d(K_0, B_2^n)).
\end{displaymath}
We now set $\alpha_0 := \max\{d(K_0, B_2^n)^{1/4}, \epsilon^{-1/2}\}$. Formally, we can replace the number $d(K_0, B_2^n)$ by a factor $2$ approximation which can be guessed or enumerated, we discuss this at the end of the description of the procedure. We can now define the next convex body
\begin{displaymath}
	K_1 = \conv\Big( \big(T_0(K_0) \cap ((M((T_0(K_0))^\circ)\cdot\alpha_0)\cdot B_2^n\big) \cup \frac{1}{M(T_0(K_0))\cdot \alpha_0}\cdot B_2^n\Big).
\end{displaymath}
This new body $K_1$ is contained in the ball of radius $M((T_0(K_0))^\circ)\cdot\alpha_0$ and contains the ball of radius $\frac{1}{M(T_0(K_0))\cdot\alpha_0}$. 
In particular, its Banach-Mazur distance to the Euclidean ball has dropped to
(at most) $M(T_0(K_0))\cdot M(T_0(K_0^\circ))\cdot\alpha_0^{2} \leq (4\cdot C)\cdot \alpha_0^{2}\cdot \log(\alpha_0) = O(d(K_0, B_2^n)^{1/2}\cdot\log(d(K_0, B_2^n)))$. \\
Crucially, we have the following volume estimate. For any symmetric convex body $P \subseteq \R^n$, 
\begin{alignat*}{2}
	2^{-Cn/\alpha_0^2}\cdot\vol(K_1 + P) &\leq \vol(T_0(K_0)+ P) &&\leq 2^{Cn/\alpha_0^2}\cdot \vol(K_1 + P).
\end{alignat*}
The constant in the exponent is universal, so we just denote it by $C$ as well. For a proof, we refer to Chapter~8 of \cite{ArtsteinAvidan2015AsymptoticGA}. \\
We now define $K_j$, for $j \geq 2$. Given $K_{j-1}$, we first compute the linear transformation $T_{j-1}$ given by the $M \, M^\circ$ estimate, and, analogously to before with $T_0(K_0)$ replaced by $T_{j-1}(K_{j-1})$ and with $\alpha_{j-1} = \max\{d(K_{j-1}, B_2^n)^{1/4}, \epsilon^{-1/2}\}$, we define
\begin{alignat*}{1}
	K_j = \conv\Big(\big(T_{j-1}(K_{j-1}) \cap (M(T_{j-1}(K_{j-1})^\circ)\cdot\alpha_{j-1})\cdot B_2^n\big) \cup \frac{1}{M(T_{j-1}(K_{j-1}))\cdot \alpha_{j-1}}\cdot B_2^n\Big).
\end{alignat*}
The volume estimate for $K_j$ and $T_{j-1}(K_{j-1})$ are as follows. For any symmetric convex body $P \subseteq \R^n$, 
\begin{gather}\label{volume_estimates_Kj}
	2^{-Cn/\alpha_{j-1}^2}\cdot\vol(K_j + P) \leq \vol(T_{j-1}(K_{j-1})+ P) \leq 2^{Cn/\alpha_{j-1}^2}\cdot \vol(K_j + P)
\end{gather}
In each iteration $K_{j-1} \rightarrow K_j$, the respective Banach Mazur distance to the Euclidean ball is dropping. Specifically,
if $d(K_{j-1}, B_2^n) \geq \Omega(\epsilon^{-1/2})$ and for $\varepsilon$ small enough, we have
\begin{displaymath}
	d(K_j, B_2^n) \leq (4\cdot C)\cdot d(K_{j-1}, B_2^n)^{1/2}\cdot \log(d(K_{j-1}, B_2^n)) \leq \frac{1}{2}d(K_{j-1}, B_2^n).
\end{displaymath}
Thus, for some iteration $t$, we will obtain a convex body $K_t$ with
\begin{alignat}{1}\label{roundness}
	\tfrac{\varepsilon}{C} \cdot B_2^n \subseteq K_t \subseteq \tfrac{C}{\epsilon}\cdot B_2^n.
\end{alignat}
These inclusions are achieved by scaling $K_t$ by (approximately) $M(T_{t-1}(K_{t-1}))$.\\
In particular, we have $\alpha_{t-1} \geq \epsilon^{-1/2}$. Since $\frac{1}{\alpha_0^2} + \cdots + \frac{1}{\alpha_{t-1}^2} \leq \epsilon\cdot O(1)$, we can iteratively combine the volume estimates in ($\ref{volume_estimates_Kj}$) to arrive at
\begin{gather}
	2^{-O(\epsilon)  n}\cdot\vol(K_t + P) \leq \vol(T_{t-1} \cdots T_0(K)+ P) \leq 2^{O(\epsilon) n}\cdot \vol(K_t + P).\label{volume_inequalities}
\end{gather} 
For $P = (C\cdot \epsilon^{-2})\cdot B_2^n$, first using the rightmost inequality in (\ref{volume_inequalities}) and then the rightmost inclusion in (\ref{roundness}),
\begin{gather}\label{conclusion_1}
	\vol(T_{t-1} \cdots T_0(K)+ (C\cdot\epsilon^{-2})\cdot B_2^n) \leq 2^{O(\epsilon)n}\cdot \vol(K_t + (C\cdot\epsilon^{-2})\cdot B_2^n) \leq 2^{O(\epsilon)n}\cdot \vol((C\cdot\epsilon^{-2})\cdot B_2^n).
\end{gather}
On the other hand, setting $P = (\epsilon^2\cdot C^{-1})\cdot B_2^n$ and using the rightmost inequality in (\ref{volume_inequalities}) and then the leftmost inclusion in (\ref{roundness}) combined with the leftmost inequality in (\ref{volume_inequalities}),
\begin{gather}\label{conclusion_2}
	\vol(T_{t-1}\cdots T_0(K) + (\epsilon^2\cdot C^{-1})\cdot B_2^n) \leq 2^{\Omega(\epsilon)n}\cdot \vol(K_t + (\epsilon^2\cdot C^{-1})\cdot B_2^n) \leq 2^{\Omega(\epsilon)n}\cdot \vol(T_{t-1}\cdots T_0(K)).
\end{gather}
We set $T_\epsilon := C^{-1}\cdot \epsilon^{2}\cdot T_{t-1}\cdots T_0$ and $c_\epsilon := C^2\cdot \epsilon^{-4}$. By (\ref{conclusion_1}) and (\ref{conclusion_2}), 
\begin{displaymath}
\text{Vol}(T_\epsilon(K) + B_2^n) \leq 2^{\epsilon n}\cdot \text{Vol}(B_2^n)
\end{displaymath}
and 
\begin{displaymath}
	\vol(B_2^n + c_\epsilon\cdot T_{\varepsilon}(K)) \leq 2^{\epsilon n}\cdot \vol(c_\epsilon\cdot T_{\varepsilon}(K))
\end{displaymath}
immediately follow (up to replacing $\epsilon$ ($C$) by a fraction (multiple) of itself in the beginning). This concludes the description of the procedure.\\
We now discuss two implementation details. First, we have relied on guessing $d(K_i, B_2^n)$ up to a factor of $2$ and seem to know the right iteration $t$ at which to stop. This is without loss of generality. Indeed, since $1 \leq d(K_i, B_2^n) \leq \sqrt{n}$ and $d(K_i, B_2^n) \leq \frac{1}{2}\cdot d(K_i, B_2^n)$, there are at most $\log(n)^{\log(n)}$ possibilities in total. We can either guess and succeed with probability $\log(n)^{-\log(n)}$, or we return $\log(n)^{\log(n)}$ different convex bodies, one of which verifies the desired properties. Both options are fine for our purpose.\footnote{Alternatively, the proof could also be modified to use an overestimate on $d(K_i,B_2^n)$ instead. We refrain from this to keep the proof readable.}
Second, we note that we assume the existence of a (weak) separation oracle for the original body $K_0 := K$. This is sufficient to construct a separation oracle for the intermediate bodies $K_j$ and to compute their respective linear transformations guaranteed by the $M\,M^\circ$ estimate in quasi-polynomial time. Indeed, each such body is of the form $K_j = \text{conv}( (\tilde{K}_{j-1} \cap R_j B_2^n) \cup r_jB_2^n)$ where $\tilde{K}_{j-1} := T_{j-1}(K_{j-1})$ and $0 < r_j \leq R_j$ are the chosen radii. Using the ellipsoid method and the equivalence of optimization and separation, we can solve the separation problem for $K_{j}$, and thus compute the linear transformation, using polynomially many calls to a separation oracle for
$K_{j-1}$. Since there are at most $O(\log(n))$ iterations to consider, each call to a separation oracle for $K_j$ can be evaluated by $n^{O(\log(n))}$ calls to the separation oracle for $K_0$ and results in an overall running time of $n^{O(\log(n))}$. For details, we refer to \cite{GroetschelLovaszSchrivjer88,DadushVempala13_Mellipsoids}. 
\newpage
\printbibliography

\newpage

\section{Appendix}
\subsection{Proof of Theorem \ref{thm:reduction_to_CVP_2}: Reducing \svp[K] to an oracle for approximate \cvp[Q]}\label{subsec:svp_K_to_cvp_q}

We can now show that for any two norms $\|\cdot\|_K, \|\cdot\|_Q$, approximate $\svp[K]$ reduces to (an oracle for) approximate $\cvp[Q]$. 
To this end, set $d=n$ (by intersecting $K$ with $\text{span}(\L)$ and rotating if necessary) and fix $\epsilon > 0$. Denote by $T^{K}_\epsilon(\cdot), T^{Q}_\epsilon(\cdot)$ the respective linear transformations guaranteed by Theorem~\ref{covering:construction}. Up to replacing $K$ by $T_\epsilon^K(K)$ and $Q$ by $T_\epsilon^Q(Q)$, we may assume that $T_\epsilon^K = T_\epsilon^Q = \text{Id}$. Indeed, an oracle for approximate \cvp[Q] is equivalent to an oracle for \cvp[T(Q)] for $T$ some invertible linear transformation. E.g. $T^{-1}(v)$ is a solution to $\alpha$-$\cvp[Q]$ (with lattice $\L$ and target $t$), where $v$ is a solution to $\alpha$-\cvp[Q] for the lattice $T(\L)$ and target $T(t)$. Similarly for \svp[K]. We can thus assume that
\begin{displaymath}
	N(K,c_{\varepsilon}\cdot Q), N(Q,c_{\varepsilon}\cdot K) \leq 2^{2\epsilon n},
\end{displaymath}
and that the corresponding volume estimates hold. 

\begin{lemma} \label{lem:covering_construction_K_Q}
	Let $T_{\varepsilon}^K : \mathbb{R}^n \to \mathbb{R}^n$ be the linear transformation satisfying Lemma~\ref{covering:construction}. Then for any symmetric convex bodies
	$K,Q \subseteq \mathbb{R}^n$ the positions $\tilde{K} := T_{\varepsilon}^K(K)$ and $\tilde{Q} := T_{\varepsilon}^Q(Q)$ satisfy
	\begin{enumerate}
		\item $N(\tilde{K},c_{\varepsilon}\cdot\tilde{Q}) \leq 2^{2\varepsilon n}$ and $\vol(\tilde{K} + c_{\varepsilon}\cdot\tilde{Q}) \leq 2^{2\varepsilon n} \cdot \vol(c_{\varepsilon}\cdot\tilde{Q})$. \label{item:covering_construction_K_Q_I}
		\item $N(\tilde{Q},c_{\varepsilon}\cdot\tilde{K}) \leq 2^{2\varepsilon n}$ and $\vol(\tilde{Q} + c_{\varepsilon}\cdot\tilde{K}) \leq 2^{2\varepsilon n}\cdot\vol(c_{\varepsilon}\cdot\tilde{K})$. \label{item:covering_construction_K_Q_II}
	\end{enumerate}
\end{lemma}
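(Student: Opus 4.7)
The plan is to chain the two covering properties of $\tilde{K}$ and $\tilde{Q}$ through the common Euclidean intermediate body $B_2^n$, using sub-multiplicativity of translative covering numbers, namely $N(A, C) \leq N(A, B) \cdot N(B, C)$ for any bodies $A, B, C$, together with the analogous sub-additivity of volumes under Minkowski sums. Since Theorem~\ref{covering:construction} already provides every ingredient separately for $K$ and for $Q$, the proof reduces to a short chaining argument; the two items are symmetric, so I will do~\eqref{item:covering_construction_K_Q_I} in detail and invoke symmetry for~\eqref{item:covering_construction_K_Q_II}.

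For the covering-number half of~\eqref{item:covering_construction_K_Q_I}, I apply Theorem~\ref{covering:construction} to $K$ to get $N(\tilde{K}, B_2^n) \leq 2^{\varepsilon n}$, and to $Q$ to get $N(B_2^n, c_\varepsilon \tilde{Q}) \leq 2^{\varepsilon n}$. Sub-multiplicativity immediately gives
$$
N(\tilde{K}, c_\varepsilon \tilde{Q}) \;\leq\; N(\tilde{K}, B_2^n)\cdot N(B_2^n, c_\varepsilon \tilde{Q}) \;\leq\; 2^{\varepsilon n}\cdot 2^{\varepsilon n} \;=\; 2^{2\varepsilon n}.
$$
For the accompanying volume bound, I start from the explicit covering $\tilde{K} \subseteq \bigcup_{i=1}^{M}(x_i + B_2^n)$ with $M \leq 2^{\varepsilon n}$, Minkowski-sum both sides with $c_\varepsilon \tilde{Q}$, and apply subadditivity of volume together with the \emph{volume} form of property~\ref{covering:2} of Theorem~\ref{covering:construction} applied to $Q$:
$$
\vol(\tilde{K}+c_\varepsilon\tilde{Q}) \;\leq\; M\cdot \vol(B_2^n + c_\varepsilon \tilde{Q}) \;\leq\; 2^{\varepsilon n}\cdot 2^{\varepsilon n}\vol(c_\varepsilon \tilde{Q}) \;=\; 2^{2\varepsilon n}\vol(c_\varepsilon \tilde{Q}).
$$
Note that only the covering-number form of property~\ref{covering:1} is used on the $\tilde K$ side (to produce the $x_i$'s), while the stronger volume form is used on the $\tilde Q$ side, so no additional input is needed.

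Item~\eqref{item:covering_construction_K_Q_II} follows by interchanging the roles of $K$ and $Q$ in the above derivation, using $N(\tilde{Q}, B_2^n)\leq 2^{\varepsilon n}$ together with $\vol(B_2^n+c_\varepsilon \tilde{K})\leq 2^{\varepsilon n}\vol(c_\varepsilon\tilde{K})$. I do not anticipate a genuine obstacle: all of the non-trivial geometry lives inside Theorem~\ref{covering:construction}, and what remains is essentially a bookkeeping step exploiting that $B_2^n$ serves as a two-sided intermediary for both bodies simultaneously. The only very mild asymmetry is that the hypotheses of Theorem~\ref{covering:construction} are stated with $B_2^n$ on one side and $c_\varepsilon\tilde{K}$ (resp.\ $c_\varepsilon\tilde{Q}$) on the other, which is exactly the reason an extra factor of $c_\varepsilon$ (rather than $1$) appears in front of the second body in the conclusion of Lemma~\ref{lem:covering_construction_K_Q}.
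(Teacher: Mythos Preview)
Your proposal is correct and follows essentially the same approach as the paper: both arguments chain through $B_2^n$ and use the fact that a covering of $\tilde{K}$ by translates of $B_2^n$ induces, after Minkowski-summing with $c_{\varepsilon}\tilde{Q}$, a covering of $\tilde{K}+c_{\varepsilon}\tilde{Q}$ by translates of $B_2^n+c_{\varepsilon}\tilde{Q}$. The only cosmetic difference is that the paper packages this as the inequality $N(A+C,B+C)\leq N(A,B)$ and derives the covering-number bound from the volume bound, whereas you prove the covering-number bound directly via sub-multiplicativity and then establish the volume bound separately; the underlying content is identical.
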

\begin{proof}
	We use the properties provided by Lemma~\ref{covering:construction} to bound
	\begin{eqnarray*}
		\vol(\tilde{K} + c_{\varepsilon}\cdot\tilde{Q}) &\leq& N(\tilde{K}+c_{\varepsilon}\cdot\tilde{Q}, B_2^n+c_{\varepsilon}\cdot\tilde{Q}) \cdot \vol(B_2^n+c_{\varepsilon}\cdot\tilde{Q}) \\
		&\stackrel{(*)}{\leq}& \underbrace{N(\tilde{K},B_2^n)}_{\leq 2^{\varepsilon n}} \cdot \underbrace{\vol(B_2^n +c_{\varepsilon}\cdot\tilde{Q})}_{\leq 2^{\varepsilon n}\cdot \vol(c_{\varepsilon}\cdot\tilde{Q})} \leq 2^{2\varepsilon n}\cdot \vol(c_{\varepsilon}\cdot \tilde{Q})
	\end{eqnarray*}
	In $(*)$ we use the fact that for any symmetric convex bodies $A,B,C$ one has $N(A+C,B+C) \leq N(A,B)$ as any covering of $A$ with translates of $B$ implies a covering of $A+C$ with translates of $B+C$. As discussed earlier, this in particular implies the
	covering estimate of $N(\tilde{K},c_{\varepsilon}\cdot\tilde{Q}) \leq 2^{2\varepsilon n}$. Direction~\ref{item:covering_construction_K_Q_II} is analogous; simply switch the roles
	of $\tilde{K}$ and $\tilde{Q}$.
\end{proof}

Finally, by scaling the lattice, we may assume that $1-1/n \leq \|\vs\|_K \leq 1$, where $\vs$ is the shortest lattice vector with respect to $\|\cdot\|_K$. We can now describe one iteration of reduction that succeeds with probability at least $2^{-\Omega(\epsilon)n}$.

\begin{enumerate}[(1)]
	\item Fix a prime number $p$ between $2^{5\epsilon n}$ and $2\cdot 2^{5\epsilon n}$.
	\item Sample a random point $t \in K + c_\epsilon \cdot Q$. 
	\item Using the number $p$, sparsify the lattice as in Theorem \ref{sparsify}. Denote the resulting lattice by $u+\L'$.
	\item Run the oracle for $\alpha$-approximate \cvp[Q] for $\L'$ with target $t-u$, add $u$ to the vector returned and store it.
	\item Repeat steps (3) and (4) $2^{8\epsilon n}$ times. Among the resulting lattice vectors and their pairwise differences, output the shortest (non-zero) with respect to $\|\cdot\|_K$. 
\end{enumerate}

\begin{theorem*}
	Let $\L \subseteq \R^d$ be any lattice of rank $n$ and let $\| \cdot \|_K,\| \cdot \|_Q$ be any norms on $\mathbb{R}^d$. For any $\epsilon > 0$, there is a constant $\gamma_\epsilon$, such that there is a randomized, $2^{\epsilon n}$ time reduction from $(\alpha \cdot \gamma_{\epsilon})$-approximate \svp[K] for $\L$ to an oracle for $\alpha$-approximate \cvp[Q] for $n$-dimensional lattices.
\end{theorem*}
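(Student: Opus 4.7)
The plan is to adapt the proof of Theorem \ref{thm:reduction_to_CVP_2} to this more general setting by replacing the role of $B_2^n$ by $c_\varepsilon \cdot \widetilde{Q}$ throughout. After reducing to $d=n$ (intersecting $K,Q$ with $\text{span}(\L)$, possibly at the cost of a factor $2\alpha+1$) and moving both bodies to the positions $\widetilde{K} := T^K_\varepsilon(K)$, $\widetilde{Q} := T^Q_\varepsilon(Q)$, Lemma \ref{lem:covering_construction_K_Q} delivers the two-sided covering bounds
\begin{displaymath}
N(\widetilde{K}, c_\varepsilon \widetilde{Q}) \leq 2^{2\varepsilon n}, \qquad N(\widetilde{Q}, c_\varepsilon \widetilde{K}) \leq 2^{2\varepsilon n},
\end{displaymath}
together with the strengthened volume estimates. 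Since an oracle for $\alpha$-\cvp[Q] is equivalent to one for $\alpha$-$\mathrm{CVP}_{T(Q)}$ under any invertible $T$, we may assume $T^K_\varepsilon = T^Q_\varepsilon = \text{Id}$. Finally, scale the lattice so that a shortest vector $\vs$ satisfies $1 - 1/n \leq \|\vs\|_{\widetilde{K}} \leq 1$. The algorithm is the five-step loop described just before the theorem statement.

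The key technical step is a version of Lemma~\ref{lemma:length} with $B_2^n$ replaced by $c_\varepsilon \widetilde{Q}$. Its proof uses no property of $B_2^n$ beyond the translative covering numbers, and the bounds above give both $N(\widetilde{K}, c_\varepsilon\widetilde{Q}) \leq 2^{2\varepsilon n}$ and $N(c_\varepsilon\widetilde{Q}, c_\varepsilon^2 \widetilde{K}) \leq 2^{2\varepsilon n}$. After a slight rescaling of $\widetilde{K}$ so that no nonzero lattice vector lies in its interior, the lemma yields $\|v\|_{\widetilde{Q}} \geq 2^{-2\varepsilon n} \cdot c_\varepsilon^{-1} \cdot p$ for every $v \in p\L \setminus \{\bm{0}\}$, which with $p \in [2^{5\varepsilon n}, 2^{5\varepsilon n+1}]$ is $\Omega(2^{3\varepsilon n}/c_\varepsilon)$. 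On the other hand, for $t$ sampled uniformly in $\widetilde{K} + c_\varepsilon \widetilde{Q}$ one has $\|t\|_{\widetilde{Q}} = O(c_\varepsilon \cdot 2^{2\varepsilon n})$, so every lattice vector $v_i$ returned by the $\alpha$-\cvp[Q] oracle on target $t-u$ (after adding $u$ back) satisfies $\|v_i\|_{\widetilde{Q}} = O(\alpha \cdot c_\varepsilon \cdot 2^{2\varepsilon n})$. For $\alpha = 2^{o(n)}$, this separates the two scales, so no previously returned $v_i$ can lie in $\vs + p\L \setminus \{\bm{0}\}$, and Theorem~\ref{sparsify} applies exactly as in the Euclidean proof.

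Each sparsified iteration then retains $\vs$ while discarding all previously stored $v_i$ with probability $\Omega(2^{-5\varepsilon n})$, as long as the list $L$ has size at most $2^{2\varepsilon n}$. Running the inner loop $2^{8\varepsilon n}$ times and applying a Chernoff bound, the list either ends up containing $\vs$ or accumulates at least $2^{2\varepsilon n}+1$ distinct lattice vectors inside $t + \alpha c_\varepsilon \widetilde{Q}$, with probability $1-2^{-n}$, conditional on the initial event $\vs \in t + c_\varepsilon \widetilde{Q}$. That event holds with probability $\geq 2^{-2\varepsilon n}$ by the volume bound and is amplified to $1-2^{-n}$ by $O(n \cdot 2^{2\varepsilon n})$ outer repetitions of the sampling step. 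In the nontrivial case, part \ref{item:covering_construction_K_Q_II} of Lemma \ref{lem:covering_construction_K_Q} (scaled by $\alpha$) gives $N(\alpha c_\varepsilon \widetilde{Q}, \alpha c_\varepsilon^2 \widetilde{K}) \leq 2^{2\varepsilon n}$, so two of the accumulated vectors $u \neq v$ must lie in a common translate of $\alpha c_\varepsilon^2 \widetilde{K}$; their difference is a nonzero lattice vector with $\|u - v\|_{\widetilde{K}} \leq 2\alpha c_\varepsilon^2$, which, setting $\gamma_\varepsilon := (1-1/n)^{-1} \cdot 2 c_\varepsilon^2$, is the desired $(\alpha\gamma_\varepsilon)$-approximation to $\vs$. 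The main obstacle, compared with the case $Q = B_2^n$, is just this $\widetilde{Q}$-norm length bound on $p\L\setminus\{\bm{0}\}$; every other component of the Euclidean proof transfers without modification, since direction \ref{item:covering_construction_K_Q_II} of Lemma \ref{lem:covering_construction_K_Q} plays exactly the role of $N(B_2^n, c_\varepsilon \widetilde{K}) \leq 2^{\varepsilon n}$ in Theorem~\ref{thm:reduction_to_CVP_2}.
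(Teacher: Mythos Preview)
Your proposal is correct and follows essentially the same route as the paper's proof; the only cosmetic difference is that you derive the $\|\cdot\|_{\tilde{Q}}$-length lower bound on $p\L \setminus \{\bm{0}\}$ by directly rerunning Lemma~\ref{lemma:length} with $c_\varepsilon \tilde{Q}$ in place of $B_2^n$, whereas the paper applies Lemma~\ref{lemma:length} twice (once for $K$ and once for $Q$, each against $B_2^n$) to obtain the sandwich $(2^{-2\varepsilon n}/c_\varepsilon)\cdot Q \subseteq K \subseteq (2^{2\varepsilon n} c_\varepsilon)\cdot Q$ and reads the bound off from there. One minor quibble: the factor $2\alpha+1$ you mention for the reduction to $d=n$ is the CVP-specific loss from Lemma~\ref{cvp:d=n} and is not incurred for \svp[], where intersecting $K$ with $\text{span}(\L)$ costs nothing.
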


\begin{proof}
	We already condition on the event that $\vs \in t + c_\epsilon\cdot Q$, this holds with probability at least $2^{-2\epsilon n}$. 
	Since Lemma \ref{covering:construction} holds for both $K$ and $Q$, we can use Lemma \ref{lemma:length} twice with $K$ and $B_2^n$ and $Q$ and $B_2^n$ respectively to find
	\begin{displaymath}
		(2^{-2\epsilon n}/c_\epsilon)\cdot Q \subseteq K \subseteq (2^{2\epsilon n}\cdot c_\epsilon)\cdot Q.
	\end{displaymath}
	Since $t \in K + B_2^n + Q$, it follows that $\|t\|_Q \leq 3\cdot c_\epsilon \cdot 2^{2\epsilon n}$. By the triangle inequality, any lattice vector $v \in \vs + p\cdot \L$ must have $\|t-v\|_Q > 2^{2\epsilon n}$. So whenever $\vs$ is retained in the sparsified lattice in the third step, the oracle must return a lattice vector $w\notin \vs + p\cdot \L$ since $\|t-w\|_Q \leq (\alpha \cdot c_\epsilon) \cdot \|t-\vs\|_Q \leq \alpha\cdot c_\epsilon \ll 2^{2\epsilon n}$.\\
	We can now proceed similarly to the proof of Theorem \ref{thm:reduction_to_CVP_2}. Denote by $L := \{v_1, \cdots, v_j\}$ the list of lattice vectors that were obtained in an iteration where $\vs$ belonged to the (shifted) sub-lattice. By Theorem~\ref{sparsify}, whenever $j \leq 2^{\epsilon n}$ and with probability at least $2^{-5\epsilon n}/2$, in each iteration a lattice vector is added to the list $L$ that is distinct from all lattice vectors in $L$ or that equals $\vs$. By Chernov's inequality and with probability at least $1-2^{-n}$, after the final iteration, the list $L$ contains $2^{2\epsilon n} + 1$ distinct lattice vectors contained inside $t + c_\epsilon \cdot K$ or contains $\vs$. In the latter case, we are done. In the former case, by \ref{item:covering_construction_K_Q_II}, $(\alpha \cdot c_\epsilon)\cdot Q$ can be covered by fewer than $2^{2\epsilon n}$ translates of $(\alpha\cdot c_\epsilon^2)\cdot Q$, i.e. $N((\alpha\cdot c_\epsilon)\cdot Q, (\alpha\cdot c_\epsilon^2) \cdot K) \leq 2^{2\epsilon n}$.
	Thus, assuming we have at least $2^{2\epsilon n} +1$ such lattice vectors, there must be two distinct lattice vectors, $v_1$ and $v_2$ say, that are contained inside the same translate of $(\alpha\cdot c_\epsilon^2) \cdot K$. By the symmetry of $K$, 
	\begin{displaymath}
		v_1 - v_2 \in (2\cdot \alpha \cdot c_\epsilon^2)\cdot K \setminus \{\bm{0}\}.
	\end{displaymath}
	Their difference $v_1 - v_2$ is then the desired $(\alpha\cdot\gamma_{\epsilon})$-approximation to $\vs$ (with $\gamma_\epsilon := 2\cdot (1-1/n)^{-1}\cdot c_\epsilon^2$).
	Repeating steps (1) to (5) $2^{3\epsilon n}$ times boosts the overall probability of success to $1-2^{-n}$.
	
\end{proof}

\subsection{Approximate \cvp[K] through a sieving algorithm for \svp[Q]}\label{subsec:cvp_K_to_sieving_Q}

In Section~\ref{sec:cvp_in_time_0.802}, we have seen how approximate $\cvp[K]$ can be solved using some technical property inherent in any sieving algorithm based on~\cite{DBLP:conf/stoc/AjtaiKS01} for approximate \svp[2]. More specifically, we have assumed that we are given access to a certain procedure that samples independent lattice vectors according to some special distribution, see Theorem \ref{main_procedure}. The procedure and its time and space requirements depend on (a slightly stronger variant of) the kissing number with respect to the Euclidean norm that equals $2^{0.401 n}$. While this procedure could be made to work in any norm, only the Euclidean norm was considered because among all \emph{known} bounds on kissing numbers, the one for the Euclidean norm is lowest. Here, we show that this is the only reason. More specifically, we show that if for \emph{some} norm $\|\cdot\|_Q$ the corresponding kissing number (i.e. the variant relevant for sieving) is better than the one for the Euclidean norm, then this would directly improve the running time for \cvp[] in \emph{any} norm. 
\smallskip

Before we state the corresponding generalisation of the procedure Theorem \ref{main_procedure} to norms other than the Euclidean norm, we discuss how this procedure relates to a variant of the kissing number. Given some convex body $L = -L\subseteq \R^n$, the kissing number for $L$ is the maximum number of non-overlapping translates of $\tfrac{1}{2}\cdot L$ with centers on $\bd{L}$. Equivalently, it is the maximum number of points $v_1, \cdots, v_N \in \bd{L}$ so that $\|v_i-v_j\|_L \geq 1$ for all $i, j \in [N]$, $i \neq j$. It is now easy to see how this relates to sieving algorithms as first described by~\cite{DBLP:conf/stoc/AjtaiKS01}. In sieving, we sample exponentially many lattice vectors and take pairwise differences to obtain shorter and shorter lattice vectors with respect to $\|\cdot\|_L$. Clearly, the kissing number with respect to $L$ is a lower bound on the number of lattice vectors that we need to sample. Indeed, we need to sample at least this many lattice vectors just to guarantee that the difference of two of them becomes strictly smaller in order to find short vectors. The space requirement of this procedure would equal the kissing number and the time requirement the square of this number, since we need to check all pairs. For some special norms, better data-structures or trade-offs time/space are known,~\cite{svp_l_infty,svp_mukh_l_p}. 
However, it is currently not known and an interesting open question, see~\cite{Talk_SteDavidowitz2020}, whether a bound on the kissing number is sufficient to obtain a sieving algorithm. Indeed, the procedure might only produce short vectors that equal $\bm{0}$ and a technical argument is needed to ensure that this does not happen. For this, a slight variation (strengthening) of the kissing number is needed. Specifically, for $L = -L \subseteq \R^n$ and $\gamma > 0$, we define $\tilde{k}(L, \gamma)$ as the maximum number of points $v_1, \cdots, v_N$ on $L\setminus(1-\gamma/n)\cdot L$ so that $\|v_i- v_j\|_L \geq 1-\gamma$ for all $i \neq j \in [N]$. For the Euclidean norm, i.e. $L = B_2^n$, this condition is equivalent to require that the angle between the $v_i$ and $v_j$ are larger than $60^\circ - \tilde{\gamma}$, where $\tilde{\gamma}$ is a function of $\gamma$. For approximation factors of order $O(1 + 1/\gamma)$, the running time of $\svp[L]$, $\cvp[L]$ and that of procedure in Theorem \ref{main_procedure} are then of order $\tilde{k}(L, \gamma)^{2}$. Whenever special data structures or trade-offs are available for finding close-by vectors with respect to $\|\cdot\|_Q$, this would directly improve the running time. This also motivates finding some norm $\|\cdot\|_Q$ where such data-structures or trade-off are available, even if its (variant of the) kissing number is worse than that of the Euclidean norm, see~\cite{MillerDavidowitz2019} for a recent attempt.\\
We can now fix some origin-symmetric convex body $Q \subseteq \R^n$, and discuss the relevant properties of a procedure similar to Theorem \ref{main_procedure} adapted to the norm $\|\cdot\|_Q$. Since we are planning to use this procedure on lattices on dimension $n+1$ to solve $\cvp[]$ through Kannan's embedding technique, we need to consider a one-dimension-higher version of $Q$. Specifically, $Q^{+1} \subseteq \R^{n+1}$ is defined as follows,
\begin{displaymath}
	Q^{+1} := \{(x, x_{n+1}) \in \R^{n}\times \R \mid x \in Q, x_{n+1} \in [-1, 1]\}.
\end{displaymath}
The convex body $Q^{+1}$ is the $n+1$ dimensional cylinder obtained by taking $Q$ as its base. The purpose of $Q^{+1}$ is that the norm $\|\cdot\|_{Q^{+1}}$ \emph{extends} the norm $\|\cdot\|_Q$ to lattices of dimension $n+1$ while the respective numbers $\tilde{k}(\cdot,\cdot)$ remain comparable. 
\begin{lemma}\label{norm_plus_1}
	Let $Q \subseteq \R^n$ be a origin symmetric convex body. Define $Q \subseteq \R^{+1}$ as above. Then
	\begin{displaymath}
		\tilde{k}(Q^{+1}, \gamma) \leq O(\tfrac{1}{(1-\gamma)}\cdot \tilde{k}(Q, \gamma)).
	\end{displaymath}
\end{lemma}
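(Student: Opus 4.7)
The natural route is to slice $Q^{+1}$ perpendicular to the extra coordinate and exploit the max-norm identity $\|(\Delta x,\Delta t)\|_{Q^{+1}} = \max(\|\Delta x\|_Q,\,|\Delta t|)$. I would partition $[-1,1]$ into $K = O(1/(1-\gamma))$ intervals of equal length $\ell < 1-\gamma$ (e.g.\ $\ell = (1-\gamma)/2$). Given a witness configuration $v_1,\ldots,v_N$ for $\tilde{k}(Q^{+1},\gamma)$ with $v_i = (x_i,t_i)$, $x_i \in Q$, $t_i \in [-1,1]$, I group the points according to which interval contains $t_i$. For any two points in the same group, $|t_i-t_j|<1-\gamma$, so the pairwise lower bound $\|v_i-v_j\|_{Q^{+1}}\geq 1-\gamma$ forces $\|x_i-x_j\|_Q \geq 1-\gamma$. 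It therefore suffices to bound the size of each group by $O(\tilde{k}(Q,\gamma))$; summing over the $K$ groups would give the target estimate.

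For a group whose interval lies strictly inside $(-(1-\gamma/(n+1)),\,1-\gamma/(n+1))$, every $v_i$ in the group has $|t_i|<1-\gamma/(n+1)$, so the shell condition $\|v_i\|_{Q^{+1}}\geq 1-\gamma/(n+1)$ must be realized through the first coordinate: $\|x_i\|_Q \geq 1-\gamma/(n+1) \geq 1-\gamma/n$. Hence the $x_i$'s lie in the shell of $Q$ with pairwise $Q$-distance at least $1-\gamma$ and form a valid configuration for $\tilde{k}(Q,\gamma)$; this caps the group size by $\tilde{k}(Q,\gamma)$. Since this handles all but at most two groups, the ``easy'' contribution is $O(\tilde{k}(Q,\gamma)/(1-\gamma))$.

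The delicate part is the at most two boundary groups whose intervals meet the thin end-bands $|t|>1-\gamma/(n+1)$: there a point $v_i$ may have arbitrarily small $\|x_i\|_Q$ (with the shell membership of $v_i$ guaranteed only by the $t$-coordinate), so the projections onto the first $n$ coordinates are not automatically a configuration for $\tilde{k}(Q,\gamma)$. To absorb these into the bound I would push each such $x_i$ radially outward onto the shell of $Q$, argue that the pairwise $Q$-distances degrade only by a constant factor, and then invoke $\tilde{k}(Q,O(\gamma))$. The main technical obstacle is precisely this claim -- that a packing of $Q$ of pairwise $Q$-distance at least $1-\gamma$ without the shell constraint still has only $O(\tilde{k}(Q,\gamma))$ points -- since a naive radial scaling can collapse distances below $1-\gamma$ for points near the origin. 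A refinement via a two-level partition (first in $t$, then in $\|x\|_Q$) or a volumetric packing estimate combined with the covering properties from Theorem~\ref{covering:construction} should do the job and is where I expect the real work to be.
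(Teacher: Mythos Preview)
Your slicing approach is exactly the paper's: partition by the last coordinate into $O(1/(1-\gamma))$ layers, observe that two centres in the same layer have $|t_i-t_j|<1-\gamma$ so the $Q^{+1}$-separation forces $\|x_i-x_j\|_Q\geq 1-\gamma$, and bound each layer by $\tilde k(Q,\gamma)$. The paper phrases this via disjoint translates $v_i+\tfrac{1-\gamma}{2}Q^{+1}$ meeting horizontal slices $\{x_{n+1}=k(1-\gamma)\}$, but the content is identical.

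Where you diverge is that you are more scrupulous than the paper. The paper simply asserts that ``by the construction of $Q^{+1}$, each such slice can intersect at most $\tilde{k}(Q,\gamma)$ translates'' and stops; it does not single out the boundary slices near $|t|=1$ or explain why the projections $x_i$ inherit the shell constraint $\|x_i\|_Q\geq 1-\gamma/n$ there. So the ``delicate part'' you isolate is glossed over in the paper as well---you are not missing a trick the authors use. Your proposed radial-projection repair indeed fails for the reason you note (points near the origin), and a constant-factor comparison between the unconstrained packing of $Q$ at separation $1-\gamma$ and $\tilde k(Q,\gamma)$ is not obvious in general. For the lemma's only application (Theorem~\ref{cvp_to_sieving_Q}), however, any $\mathrm{poly}(n)$ overhead is absorbed into the $2^{\varepsilon n}$ slack, and such a bound is routine: split $Q$ into $O(\tfrac{n}{\gamma}\log\tfrac{1}{1-\gamma})$ concentric shells of ratio $1-\gamma/n$ down to radius $(1-\gamma)/2$, rescale each onto the outer shell, and apply $\tilde k(Q,\gamma)$ shell by shell (with at most one extra point in the remaining inner ball).
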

\begin{proof}
	Consider the maximum number of disjoint translates $v_i + \tfrac{1-\gamma}{2}\cdot Q^{+1}$, $i \in \{1, \cdots, \tilde{k}(Q^{+1}, \gamma)\}$, where the $v_i$ lie in $Q^{+1}\setminus (1-\gamma/n)\cdot Q^{+1}$. Each such translate must intersect at least one slice of the form $Q^{+1} \cap \{x \in \R^{n+1} \mid x_{n+1} = k \cdot (1-\gamma)\}$, for $k \in \{-(1-\gamma)^{-1}, \cdots, (1-\gamma)^{-1}\}$. On the other hand, by the construction of $Q^{+1}$, each such slice can intersect at most $\tilde{k}(Q, \gamma)$ translates of $\tfrac{1-\gamma}{2}\cdot Q^{+1}$. The bound follows.
\end{proof}
We can now adapt the procedure from Theorem \ref{main_procedure} to arbitrary norms. The formal proofs of these statements are easily obtained by adapting the proof in~\cite{sieving2PS} as outlined in the informal discussion in~\cite{EisenbrandVenzin}, e.g. replacing the Euclidean norm $\|\cdot\|_2$ by $\|\cdot\|_{Q^{+1}}$ and using Lemma \ref{norm_plus_1}. 

\begin{theorem}\label{main_procedure_Q}
	Let $Q=-Q\subseteq \R^n$ and define $Q^{+1} = -Q^{+1}\subseteq \R^{n+1}$ as above. 
	Given $\gamma, \epsilon > 0$, $R > 0$, $N \in \N$ and a lattice $\L \subseteq \mathbb{R}^{n+1}$ of rank $n+1$, there is a randomized procedure that produces independent samples $v_1, \cdots, v_N \sim \mathcal{D}$, where the distribution $\mathcal{D}$ satisfies the following two properties:
	\begin{enumerate}
		\item Every sample $v \sim \mathcal{D}$ has $v \in \L$ and $\|v\|_{Q^{+1}} \leq a_{\epsilon, \gamma} \cdot R$, where $a_{\epsilon, \gamma}$ only depends on $\epsilon$ and $\gamma$.\label{main_procedure_Q:1} 
		\item\label{main_procedure_Q:2} For any $\vs \in \L$ with $\|\vs\|_{Q^{+1}} \leq R$, there are distributions $\mathcal{D}^{\vs}_0$ and $\mathcal{D}^{\vs}_1$ and some parameter $\rho_\vs$ with $2^{-\epsilon n} \leq \rho_\vs \leq 1$ such that the distribution $\mathcal{D}$ is equivalent to the following process:
		\begin{enumerate}
			\item \label{main_procedure_Q:2a}With probability $\rho_\vs$, sample $u \sim \mathcal{D}^\vs_0$. Then, flip a fair coin and with probability $1/2$, return $u$, otherwise return $u + \vs$.
			\item With probability $1-\rho_\vs$, sample $u \sim \mathcal{D}^\vs_1$.
		\end{enumerate}
	\end{enumerate}
	This procedure takes time $ \tilde{k}(Q^{+1},\gamma)^{2} \cdot 2^{\epsilon n} + N\cdot \tilde{k}(Q^{+1}, \gamma)\cdot 2^{\epsilon n}$ and $N + \tilde{k}(Q^{+1}, \gamma)\cdot 2^{\epsilon n}$ space and succeeds with probability at least $1/2$.
\end{theorem}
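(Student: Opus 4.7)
My plan is to follow the existing proof of Theorem~\ref{main_procedure} from \cite{EisenbrandVenzin}, based on the Pujol--Stehl\'e sieve of \cite{sieving2PS}, and substitute $\|\cdot\|_{Q^{+1}}$ for the Euclidean norm throughout. The crucial observation is that the Pujol--Stehl\'e sieve depends on the ambient norm only through (i) a covering bound for the unit ball by slightly smaller translates of itself, and (ii) symmetry of the ball. Both ingredients survive the substitution: the former is controlled by $\tilde{k}(Q^{+1},\gamma)$ via Lemma~\ref{norm_plus_1}, and the latter holds by definition of $Q^{+1}$.

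To establish property~\ref{main_procedure_Q:1} and the running time, I would first sample an initial pool of $M := \tilde{k}(Q^{+1},\gamma)\cdot 2^{\Theta(\epsilon)n}$ lattice vectors using the AKS perturbation sampler \cite{DBLP:conf/stoc/AjtaiKS01} inside a $\|\cdot\|_{Q^{+1}}$-ball of radius $\Theta(a_{\epsilon,\gamma}\cdot R)$, and then run the $O(\log R)$-level sieve. At each level, the current list is bucketed by a random covering of the current scale (a dilate of $Q^{+1}$) by $(1-\gamma)$-dilates of $Q^{+1}$; the number of buckets needed is at most $\tilde{k}(Q^{+1},\gamma)\cdot 2^{\epsilon n}$ by the standard packing-implies-covering argument, which only uses symmetry. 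Taking differences of vectors sharing a bucket produces the next list at scale $(1-\gamma)$ times smaller. A constant fraction of the vectors survive, giving a stored pool of size $\tilde{k}(Q^{+1},\gamma)\cdot 2^{O(\epsilon)n}$ of short vectors in time $M^2\cdot 2^{O(\epsilon)n} = \tilde{k}(Q^{+1},\gamma)^2\cdot 2^{O(\epsilon)n}$ and space $M$. Each of the $N$ output samples is then drawn by pairing one fresh long lattice vector against the stored pool, at a cost of $\tilde{k}(Q^{+1},\gamma)\cdot 2^{O(\epsilon)n}$ per sample, matching the stated complexity.

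For the decomposition in property~\ref{main_procedure_Q:2}, I would fix any $\vs\in\L$ with $\|\vs\|_{Q^{+1}}\leq R$ and inductively construct coupled distributions $\mathcal{D}_0^{\vs},\mathcal{D}_1^{\vs}$ propagating through the sieve, together with the threshold probability $\rho_\vs$. At each level, the coupling pairs list elements $u$ and $u+\vs$; since a uniformly random covering offset places $u$ and $u+\vs$ into the same bucket with probability at least $1/\tilde{k}(Q^{+1},\gamma)$, a non-negligible fraction of mass carries through all $O(\log R)$ levels, and one obtains $\rho_\vs\geq 2^{-\epsilon n}$. At the output, by symmetry of the bucket-pairing, the sample is uniform over $\{u,u+\vs\}$ with $u\sim\mathcal{D}_0^{\vs}$ on the ``successful'' event, which is exactly the form required in~\ref{main_procedure_Q:2a}.

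The main obstacle in transporting the argument is verifying that every probabilistic step of the Pujol--Stehl\'e coupling is strictly norm-agnostic, invoking neither the Pythagorean identity nor inner-product computations, and using only symmetry of $Q^{+1}$ together with covering counts. The sole ball-specific ingredient in the Euclidean analysis is the kissing-number bound used at the terminal scale; this is precisely what Lemma~\ref{norm_plus_1} supplies for $\|\cdot\|_{Q^{+1}}$, bounded in terms of $\tilde{k}(Q,\gamma)$. Once this norm-agnosticity is checked, the $1/2$ overall success probability and all remaining concentration arguments go through verbatim under the substitution $\|\cdot\|_2 \leftrightarrow \|\cdot\|_{Q^{+1}}$.
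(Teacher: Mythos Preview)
Your proposal is correct and follows essentially the same approach as the paper, which simply states that the formal proof is obtained by adapting the Pujol--Stehl\'e sieve from \cite{sieving2PS} as outlined in \cite{EisenbrandVenzin}, replacing $\|\cdot\|_2$ by $\|\cdot\|_{Q^{+1}}$ and invoking Lemma~\ref{norm_plus_1}. Your outline is in fact more detailed than the paper's own treatment, which leaves the norm-agnosticity check and the coupling argument implicit.
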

We now how to solve approximate $\cvp[K]$ on $n$-dimensional lattices assuming we have access to the above distribution for some convex body $Q \subseteq \R^n$.

\begin{theorem}\label{cvp_to_sieving_Q}
	Let $Q=-Q \subseteq \R^n$ be such that $\tilde{k}(Q, \gamma) \leq 2^{\beta n}$ for some $\gamma > 0$. Then, for any norm $\|\cdot\|_K$ and in time $2^{(2\beta+\epsilon )n}$ and space $2^{(\beta+\epsilon )n}$, we obtain a $O_{\epsilon, \gamma}(1)$-approximation to \cvp[K] on any lattice of rank $n$.
\end{theorem}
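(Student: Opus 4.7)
The plan is to adapt the proof of Theorem~\ref{cvp_to_sieving_2} essentially verbatim, making two changes: replace the role of $B_2^n$ by $Q$ throughout by invoking Lemma~\ref{lem:covering_construction_K_Q} so that $K$ and $Q$ sit in simultaneously compatible positions, and replace the black-box call to Theorem~\ref{main_procedure} by the norm-$\|\cdot\|_{Q^{+1}}$ version, Theorem~\ref{main_procedure_Q}. First I would reduce to the full-dimensional case $d=n$ via Lemma~\ref{cvp:d=n}, letting $\vc \in \L$ denote a true closest lattice vector. Applying the linear transformations provided by Theorem~\ref{covering:construction} to $K$ and to $Q$ independently, and relabelling $\L$, $t$, $K$, $Q$ accordingly, I may assume by Lemma~\ref{lem:covering_construction_K_Q} that $N(K,c_\epsilon \cdot Q), N(Q, c_\epsilon \cdot K) \leq 2^{2\epsilon n}$ together with the corresponding volume inequalities. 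Finally I would scale so that $1-1/n \leq \|t-\vc\|_K \leq 1$.

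The main algorithmic step is as follows. Sample $\tilde t$ uniformly at random from $t + K + c_\epsilon \cdot Q$; by the volume covering estimate of part (\ref{item:covering_construction_K_Q_I}) of Lemma~\ref{lem:covering_construction_K_Q}, the event $\vc \in \tilde t + c_\epsilon \cdot Q$ has probability at least $2^{-2\epsilon n}$ and I condition on it. Form Kannan's embedding lattice $\L' \subseteq \R^{n+1}$ with basis $\tilde B = \bigl(\begin{smallmatrix} B & \tilde t \\ 0 & 1/n\end{smallmatrix}\bigr)$; the vector $\vs := (\tilde t - \vc, 1/n)$ lies in $\L'$ and satisfies $\|\vs\|_{Q^{+1}} \leq R$ for $R := c_\epsilon + 1$. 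Now invoke Theorem~\ref{main_procedure_Q} with parameters $\epsilon, \gamma, R$ and $N=2$ to sample two lattice vectors $v_1', v_2' \in \L'$ with $\|v_i'\|_{Q^{+1}} \leq a_{\epsilon,\gamma} \cdot R$. With probability at least $\tfrac{1}{2}\rho_{\vs}^2 \geq \tfrac{1}{2}\cdot 2^{-2\epsilon n}$ both samples are produced through branch (\ref{main_procedure_Q:2a}), so $v_i' = v_i + \sigma_i \vs$ with $v_1,v_2 \sim \mathcal{D}_0^{\vs}$ independent and $\sigma_1,\sigma_2 \in \{0,1\}$ uniform and independent.

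Next I would count coverings. Any integer layer $\{x_{n+1} = k/n\}$ of the cylinder $a_{\epsilon,\gamma}R\cdot Q^{+1}$ is, up to translation in the last coordinate, a copy of $a_{\epsilon,\gamma}R\cdot Q \times \{0\}$, which by part~(\ref{item:covering_construction_K_Q_II}) of Lemma~\ref{lem:covering_construction_K_Q} is covered by at most $2^{2\epsilon n}$ translates of $(a_{\epsilon,\gamma} R c_\epsilon)\cdot K \times \{0\}$. Since only $O(n\cdot a_{\epsilon,\gamma})$ such layers meet $a_{\epsilon,\gamma} R \cdot Q^{+1} \cap \L'$, the whole intersection is covered by at most $\poly(n)\cdot 2^{2\epsilon n}$ translates of $(a_{\epsilon,\gamma}R c_\epsilon)\cdot K \times \{0\}$. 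Since $v_1,v_2$ are i.i.d., with probability at least $\poly(n)^{-1}\cdot 2^{-2\epsilon n}$ they land in the same such translate, hence $v_1 - v_2 \in (2 a_{\epsilon,\gamma} R c_\epsilon)\cdot K \times \{0\}$. Now conditioning on $\sigma_1 = 1, \sigma_2 = 0$ (probability $1/4$), the difference $v_1' - v_2' = \vs + (v_1-v_2)$; subtracting $\tilde t$ from its first $n$ coordinates and negating yields $\vc - u$ with $u \in (2 a_{\epsilon,\gamma} R c_\epsilon)\cdot K$, which by the triangle inequality is an $O_{\epsilon,\gamma}(1)$-approximation to the closest lattice vector to $t$. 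Repeating the entire procedure $2^{O(\epsilon) n}$ times and retaining the best candidate boosts the success probability to $1-2^{-n}$.

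For the complexity, using Lemma~\ref{norm_plus_1} one has $\tilde k(Q^{+1},\gamma) \leq O_\gamma(\tilde k(Q,\gamma)) \leq 2^{\beta n + o(n)}$, and with $N=2$ samples, Theorem~\ref{main_procedure_Q} runs in time $\tilde k(Q^{+1},\gamma)^2 \cdot 2^{\epsilon n} \leq 2^{(2\beta+\epsilon)n}$ and space $\tilde k(Q^{+1},\gamma)\cdot 2^{\epsilon n} \leq 2^{(\beta+\epsilon)n}$; the $2^{O(\epsilon) n}$ repetitions only affect the constant in front of $\epsilon$. The one delicate point I expect is to verify quantitatively that the covering of the $(n+1)$-dimensional cylinder $a_{\epsilon,\gamma}R\cdot Q^{+1}$ by translates of $K\times\{0\}$ does not blow up: this is exactly where the layered structure of $Q^{+1}$ together with Lemma~\ref{norm_plus_1} and part~(\ref{item:covering_construction_K_Q_II}) of Lemma~\ref{lem:covering_construction_K_Q} combine, and everything extraneous must be absorbed into the $2^{\epsilon n}$ slack.
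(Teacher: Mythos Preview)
Your proposal is correct and follows essentially the same route as the paper's own proof: reduce to full rank via Lemma~\ref{cvp:d=n}, put $K$ and $Q$ in compatible positions via Lemma~\ref{lem:covering_construction_K_Q}, sample $\tilde t$, apply Kannan's embedding, call Theorem~\ref{main_procedure_Q} with $N=2$, and argue via the layered covering of $a_{\epsilon,\gamma}R\cdot Q^{+1}$ by translates of $c_{\epsilon}\cdot K\times\{0\}$ that the two i.i.d.\ base samples collide in one translate with probability $2^{-O(\epsilon)n}$. The only detail you leave implicit but the paper states explicitly is that when you ``relabel $Q$'' by applying $T_\epsilon^Q$, the hypothesis $\tilde k(Q,\gamma)\le 2^{\beta n}$ is preserved because $\tilde k(\cdot,\gamma)$ is invariant under invertible linear maps; this is needed so that the complexity bound of Theorem~\ref{main_procedure_Q} still reads $2^{(2\beta+\epsilon)n}$ after the change of coordinates.
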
 

\begin{proof}
	We first invoke Lemma \ref{cvp:d=n} to restrict to a lattice of dimension $n$. Analogously to subsection \ref{subsec:svp_K_to_cvp_q}, we may assume that Lemma \ref{lem:covering_construction_K_Q} holds with $T_\epsilon^Q = T_\epsilon^K=\text{Id}$. Indeed, it is enough to solve $\cvp[T^{-1}_\epsilon(K)]$ and for any invertible transformation $T$, we have that $\tilde{k}(Q, \gamma) = \tilde{k}(T(Q), \gamma)$. Finally, we scale the lattice so that distance of the closest lattice lattice vector $\vc \in \L$ to $t$ is close to $1$, i.e. $1-1/n \leq \|t-\vc\|_K \leq 1$.\\
	We now sample a random point $\tilde{t}$ within $t + K + c_\epsilon \cdot Q$. By Lemma \ref{lem:covering_construction_K_Q} and with probability at least $2^{-2\epsilon n}$, $\vc \in \tilde{t}+c_\epsilon \cdot Q$, we condition on this event. For this vector $\tilde{t}$ we define the lattice $\L'\subseteq \R^{n+1}$ of rank $n+1$ with the following basis:
	\begin{displaymath}
		\tilde{B} = \begin{pmatrix}
			B & \tilde{t}\\
			0 & 1/n\\
		\end{pmatrix}\in \Q^{(n+1)\times(n+1)}.
	\end{displaymath}
	We now consider the norm $\|\cdot\|_{Q^{+1}}$ induced by $Q^{+1} \subseteq \R^{n+1}$. We define $\vs := (\tilde{t}-\vc, 1/n) \in \R^{n+1}$, observe that $1-1/n \leq \|\vs\|_{Q^{+1}} \leq 1$. By Lemma \ref{lem:covering_construction_K_Q}, each layer of the form $(c_\epsilon \cdot a_{\epsilon, \gamma})\cdot Q^{+1} \cap \{x_{n+1} = 1/n\}$ can be covered using $2^{2\epsilon n}$ translates of $(c_\epsilon^2 \cdot a_{\epsilon, \gamma})\cdot K\times\{0\}$. In particular, all lattice vectors of length at most $a_{\epsilon, \gamma}\cdot c_\epsilon$ (with respect to $\|\cdot\|_{Q^{+1}}$) can be covered by $(2\cdot a_{\epsilon, \gamma}\cdot n + 1)\cdot 2^{2\epsilon n}$ translates of $(c_\epsilon^2 \cdot a_{\epsilon, \gamma})\cdot K\times\{0\}$.\\
	We now use the procedure from Theorem \ref{main_procedure_Q} with $R:= c_\epsilon$ and sample $N:= 2$ lattice vectors of length at most $a_{\epsilon, \gamma} \cdot R$. With probability at least $\tfrac{1}{2}\cdot 2^{-2\epsilon n}$ this succeeds and both lattice vectors are generated according to (\ref{main_procedure_Q:2a}). We condition on this event and denote by $v_1, v_2$ the resulting lattice vectors. Since $v_1$ and $v_2$ are i.i.d. and with probability at least $(2\cdot a_{\epsilon, \gamma}\cdot n + 1)^{-2}\cdot 2^{-4\epsilon n}$, there is a translate of $(c_\epsilon^2 \cdot a_{\epsilon, \gamma})\cdot K\times\{0\}$ that holds both $v_1$ and $v_2$. It follows that
	\begin{displaymath}
		v_1 - v_2 \in (2\cdot c_\epsilon^2\cdot a_{\epsilon, \gamma})\cdot K \times \{0\}.
	\end{displaymath}
	We can now use property (\ref{main_procedure_Q:2a}). With probability at least $1/4$, instead of returning $v_1$ and $v_2$, the procedure returns $v_1 + \vs$ and $v_2$. We condition on this event. Their difference is then
	\begin{displaymath}
		(v_1 + \vs)- v_2 \in (2 \cdot c_\epsilon^2\cdot a_{\epsilon, \gamma} ) \cdot K\times \{0\} + \vs
	\end{displaymath}
	We can rewrite this as
	\begin{displaymath}
		(v_1 + \vs) - v_2 =\begin{pmatrix}
			u\\
			0
		\end{pmatrix} + \vs = \begin{pmatrix}
			u-\vc\\
			0
		\end{pmatrix}
		+\begin{pmatrix}
			\tilde{t}\\
			1/n
		\end{pmatrix},
	\end{displaymath}
	for some $u \in \L$. Crucially
	\begin{displaymath}
		\|u\|_Q = \left\lVert\begin{pmatrix}
			u\\
			0
		\end{pmatrix}\right\rVert_{Q^{+1}} \leq (2\cdot c_\epsilon^2\cdot a_{\epsilon, \gamma} ).
	\end{displaymath}
	The lattice vector $\vc - u$ is then the desired approximation. Indeed, by the triangle inequality,
	\begin{displaymath}
			\|t-(\vc-u)\|_K \leq \|t-\vc\|_K + \|u\|_K \leq 1 + (2\cdot c_{\epsilon}^2 \cdot a_\epsilon) := \beta_\epsilon
	\end{displaymath}
	We set $\gamma_\epsilon := (1-1/n)^{-1}\cdot (2\beta_\epsilon + 1)$ as we have used Lemma \ref{cvp:d=n} and scaled the lattice so that $\|t-\vc\|_K \geq 1-1/n$. The lattice vector $\vc - u$ is then a $\gamma_\epsilon$-approximation to the closest lattice vector to $t$.\\
	Repeating this procedure $2^{\Omega(\epsilon) n}$ times starting from where we sampled $\tilde{t} \in t + K + c_\epsilon \cdot Q$ boosts the probability of success to $1-2^{-n}$. Using Lemma \ref{norm_plus_1}, the time and space requirements follow from those of the procedure from Theorem \ref{main_procedure_Q}.
\end{proof}

\end{document}